\newtheorem{theorem}{Theorem}
\newtheorem{lemma}{Lemma}
\newtheorem{definition}{Definition}
\newcommand{\rdelta}{\textbf{d}}
\newcommand{\V}{V}
\newcommand{\E}{\mathbb{E}}
\newcommand{\is}{{i^*}}
\newcommand{\vecspan}{\textrm{span}}
\newcommand{\VecSet}{X}
\newcommand{\Matrix}{A}
\newcommand{\uv}{\textbf{u}}
\newcommand{\y}{\textbf{y}}
\newcommand{\U}{\textbf{U}}
\newcommand{\inc}[1]{\chi_{#1}}
\newcommand{\z}{\textbf{z}}
\newcommand{\vs}{\textbf{v}}
\newcommand{\Q}{\textbf{Q}}
\newcommand{\q}{\textbf{q}}
\newcommand{\eps}{\varepsilon}
\begin{document}

\title{The Cell Probe Complexity of Dynamic Range Counting}


\author{
Kasper Green Larsen\thanks{Kasper Green Larsen is a recipient
    of the Google Europe Fellowship in Search and Information
    Retrieval, and this research is also supported in part by this
    Google Fellowship.}\\
       MADALGO\thanks{Center for Massive Data Algorithmics, a Center of the Danish National Research Foundation.}, Department of Computer Science\\
       Aarhus University\\
       Aarhus, Denmark\\
       email: larsen@cs.au.dk
}

\maketitle

\begin{abstract} 
In this paper we develop a new technique for proving lower bounds on
the update time and query time of dynamic data structures in the cell
probe model. With this technique, we prove the highest lower bound to
date for any explicit problem, namely a lower bound of
$t_q=\Omega((\lg n/\lg(wt_u))^2)$. Here $n$ is the number of update
operations, $w$ the cell size, $t_q$ the query time and $t_u$ the
update time. In the most natural setting of cell size $w=\Theta(\lg
n)$, this gives a lower bound of $t_q=\Omega((\lg n/\lg \lg n)^2)$ for
any polylogarithmic update time. This bound is almost a quadratic
improvement over the highest previous lower bound of $\Omega(\lg n)$,
due to P{\v a}tra{\c s}cu and Demaine [SICOMP'06].

We prove the lower bound for the fundamental problem of weighted
orthogonal range counting. In this problem, we are to support
insertions of two-dimensional points, each assigned a $\Theta(\lg
n)$-bit integer weight. A query to this problem is specified by a
point $q=(x,y)$, and the goal is to report the sum of the weights
assigned to the points dominated by $q$, where a point $(x',y')$ is
dominated by $q$ if $x' \leq x$ and $y' \leq y$. In addition to being
the highest cell probe lower bound to date, the lower bound is also
tight for data structures with update time $t_u =
\Omega(\lg^{2+\eps}n)$, where $\eps>0$ is an arbitrarily small
constant.
\end{abstract}

\newpage

\section{Introduction}
Proving lower bounds on the operational time of data structures has
been an active line of research for decades. During these years,
numerous models of computation have been proposed, including the cell
probe model of Yao~\cite{yao:cellprobe}. The cell probe model is one
of the least restrictive lower bound models, thus lower bounds proved
in the cell probe model apply to essentially every imaginable data
structure, including those developed in the popular upper bound model,
the word RAM. Unfortunately this generality comes at a cost: The
highest lower bound that has been proved for any explicit data
structure problem is $\Omega(\lg n)$, both for static and even dynamic
data structures\footnote{This is true under the most natural
  assumption of cell size $\Theta(\lg n)$.}.

In this paper, we break this barrier by introducing a new technique
for proving dynamic cell probe lower bounds. Using this technique, we
obtain a query time lower bound of $\Omega((\lg n/\lg \lg n)^2)$ for
any polylogarithmic update time. We prove the bound for the
fundamental problem of dynamic \emph{weighted orthogonal range
  counting} in two-dimensional space. In dynamic weighted orthogonal
range counting (in 2-d), the goal is to maintain a set of (2-d) points
under insertions, where each point is assigned an integer weight. In
addition to supporting insertions, a data structure must support
answering queries. A query is specified by a query point $q=(x,y)$,
and the data structure must return the sum of the weights assigned to
the points \emph{dominated} by $q$. Here we say that a point
$(x',y')$ is dominated by $q$ if $x' \leq x$ and $y' \leq y$.

\subsection{The Cell Probe Model}
A dynamic data structure in the cell probe model consists of a set of
memory cells, each storing $w$ bits. Each cell of the data structure
is identified by an integer address, which is assumed to fit in $w$
bits, i.e. each address is amongst $[2^w]=\{0,\dots,2^w-1\}$. We will
make the additional standard assumption that a cell also has enough
bits to address any update operation performed on it, i.e. we assume
$w = \Omega(\lg n)$ when analysing a data structure's performance on a
sequence of $n$ updates.

When presented with an update operation, a data structure reads and
updates a number of the stored cells to reflect the changes. The cell
read (or written to) in each step of an update operation may depend
arbitrarily on the update and the contents of all cells previously
probed during the update. We refer to the reading or writing of a cell
as probing the cell, hence the name cell probe model. The update time
of a data structure is defined as the number of cells probed when
processing an update.

To answer a query, a data structure similarly probes a number of cells
from the data structure and from the contents of the probed cells,
the data structure must return the correct answer to the query. Again,
the cell probed at each step, and the answer returned, may be an
arbitrary function of the query and the previously probed cells. We
similarly define the query time of a data structure as the number of
cells probed when answering a query.

\paragraph{Previous Results.}
In the following, we give a brief overview of the most important
techniques that have been introduced for proving cell probe lower
bounds for dynamic data structures. We also review the previous cell
probe lower bounds obtained for orthogonal range counting and related
problems. In Section~\ref{sec:tech} we then give a more thorough
review of the previous techniques most relevant to this work, followed
by a description of the key ideas in our new technique.

In their seminal paper~\cite{Fredman:chrono}, Fredman and Saks
introduced the celebrated chronogram technique. They applied their
technique to the \emph{partial sums} problem and obtained a lower
bound stating that $t_q = \Omega(\lg n/\lg(wt_u))$, where $t_q$ is the
query time and $t_u$ the update time. In the partial sums problem, we
are to maintain an array of $n$ $O(w)$-bit integers under updates
of the entries. A query to the problem consists of two indices $i$ and
$j$, and the goal is to compute the sum of the integers in the
subarray from index $i$ to $j$. The lower bound of Fredman and Saks
holds even when the data structure is allowed amortization and
randomization.

The bounds of Fredman and Saks remained the highest achieved until the
breakthrough results of P{\v a}tra{\c s}cu and
Demaine~\cite{Patrascu:loga}. In their paper, they extended upon the
ideas of Fredman and Saks to give a tight lower bound for the partial
sums problem. Their results state that $t_q\lg(t_u/t_q)=\Omega(\lg n)$
and $t_u \lg(t_q/t_u)=\Omega(\lg n)$ when the integers have
$\Omega(w)$ bits, which in particular implies
$\max\{t_q,t_u\}=\Omega(\lg n)$. We note that they also obtain tight
lower bounds in the regime of smaller integers. Again, the bounds hold
even when allowed amortization and randomization. For the most natural
cell size of $w=\Theta(\lg n)$, this remains until today the highest
achieved lower bound.

The two above techniques both lead to smooth tradeoff curves between
update time and query time. While this behaviour is correct for the
partial sums problem, there are many examples where this is certainly
not the case. P{\v a}tra{\c s}cu and Thorup~\cite{patrascu11unions}
recently presented a new extension of the chronogram technique, which
can prove strong threshold lower bounds. In particular they showed
that if a data structure for maintaining the connectivity of a graph
under edge insertions and deletions has amortized update
time just $o(\lg n)$, then the query time explodes to $n^{1-o(1)}$.

In the search for super-logarithmic lower bounds, P{\v a}tra{\c s}cu introduced
a dynamic set-disjointness problem named the multiphase
problem~\cite{patrascu10mp-3sum}. Based on a widely believed
conjecture about the hardness of 3-SUM, P{\v a}tra{\c s}cu first reduced 3-SUM
to the multiphase problem and then gave a series of reductions to
different dynamic data structure problems, implying polynomial lower
bounds under the 3-SUM conjecture.

Finally, we mention that P{\v a}tra{\c s}cu~\cite{patrascu07sum2D}
presented a technique capable of proving a lower bound of
$\max\{t_q,t_u\}=\Omega((\lg n/\lg \lg n)^2)$ for dynamic weighted
orthogonal range counting, but only when the weights are
$\lg^{2+\eps}n$-bit integers where $\eps>0$ is an arbitrarily small
constant. For range counting with $\delta$-bit weights, it is most
natural to assume that the cells have enough bits to store the
weights, since otherwise one immediately obtains an update time lower
bound of $\delta/w$ just for writing down the change. Hence his proof
is meaningful only in the case of $w=\lg^{2+\eps}n$ as well (as
he also notes). Thus the magnitude of the lower bound compared to the
number of bits, $\delta$, needed to describe an update operation (or a
query), remains below $\Omega(\delta)$.  This bound holds when $t_u$
is the worst case update time and $t_q$ the expected
average\footnote{i.e. for any data structure with a possibly
  randomized query algorithm, there exists a sequence of updates $U$,
  such that the expected cost of answering a uniform random query
  after the updates $U$ is $t_q$.} query time of a data structure.

The particular problem of orthogonal range counting has received much
attention from a lower bound perspective. In the static case, P{\v
  a}tra{\c s}cu~\cite{patrascu07sum2D} first proved a lower bound of
$t=\Omega(\lg n/\lg(Sw/n))$ where $t$ is the expected average query
time and $S$ the space of the data structure in number of cells. This
lower bound holds for regular counting (without weights), and even
when just the parity of the number of points in the range is to be
returned. In~\cite{patrascu08structures} he reproved this bound using
an elegant reduction from the communication game known as lop-sided
set disjointness. Subsequently J\o rgensen and
Larsen~\cite{larsen:median} proved a matching bound for the strongly
related problems of range selection and range median. Finally, as
mentioned earlier, P{\v a}tra{\c s}cu~\cite{patrascu07sum2D} proved a
$\max\{t_q,t_u\}=\Omega((\lg n/\lg \lg n)^2)$ lower bound for dynamic
weighted orthogonal range counting when the weights are
$\lg^{2+\eps}n$-bit integers. In the concluding remarks of that
paper, he posed it as an interesting open problem to prove the same
lower bound for regular counting.

\paragraph{Our Results.}
In this paper we introduce a new technique for proving dynamic cell
probe lower bounds. Using this technique, we obtain a lower bound of
$t_q=\Omega((\lg n/\lg(wt_u))^2)$, where $t_u$ is the worst case
update time and $t_q$ is the expected average query time of the data
structure. The lower bound holds for any cell size $w=\Omega(\lg n)$,
and is the highest achieved to date in the most natural setting of
cell size $w=\Theta(\lg n)$.  For polylogarithmic $t_u$ and
logarithmic cell size, this bound is $t_q = \Omega((\lg n/\lg \lg
n)^2)$, i.e. almost a quadratic improvement over the highest previous
lower bound of P{\v a}tra{\c s}cu and Demaine.

We prove the lower bound for dynamic weighted orthogonal range
counting in two-dimensional space, where the weights are $\Theta(\lg
n)$-bit integers. This gives a partial answer to the open problem
posed by P{\v a}tra{\c s}cu by reducing the requirement of the
magnitude of weights from $\lg^{2+\eps}n$ to just
logarithmic. Finally, the lower bound is also tight for any update
time that is at least $\lg^{2+\eps}n$, hence deepening our
understanding of one of the most fundamental range searching problems.

\paragraph{Overview.}
In Section~\ref{sec:tech} we discuss the two previous techniques most
related to ours, i.e. that of Fredman and Saks~\cite{Fredman:chrono}
and of P{\v a}tra{\c s}cu~\cite{patrascu07sum2D}. Following this
discussion, we give a description of the key ideas behind our new
technique. Having introduced our technique, we first demonstrate it on
an artificial range counting problem that is tailored for our
technique (Section~\ref{sec:ex}) and then proceed to the main lower
bound proof in Section~\ref{sec:dyn}. Finally we conclude in
Section~\ref{sec:conclude} with a discussion of the limitations of our
technique and the intriguing open problems these limitations pose.

\section{Techniques}
\label{sec:tech}
In this section, we first review the two previous techniques most
important to this work, and then present our new technique.

\paragraph{Fredman and Saks~\cite{Fredman:chrono}.} This technique is known as the chronogram technique. The basic idea is to consider batches, or \emph{epochs}, of updates to a data structure problem. More formally, one defines an epoch $i$ for each $i=1,\dots,\lg_\beta n$, where $\beta>1$ is a parameter. The $i$'th epoch consists of performing $\beta^i$ randomly chosen updates. The epochs occur in time from largest to smallest epoch, and at the end of epoch $1$, every cell of the constructed data structure is associated to the epoch in which it was last updated. The goal is to argue that to answer a query after epoch $1$, the query algorithm has to probe one cell associated to each epoch. Since a cell is only associated to one epoch, this gives a total query time lower bound of $\Omega(\lg_\beta n)$.

Arguing that the query algorithm must probe one cell associated to
each epoch is done by setting $\beta$ somewhat larger than the worst
case update time $t_u$ and the cell size $w$. Since cells associated
to an epoch $j$ cannot contain useful information about an epoch $i<j$
(the updates of epoch $j$ were performed before knowing what the
updates of epoch $i$ was), one can ignore cells associated to such
epochs when analysing the probes to an epoch $i$. Similarly, since all
epochs following epoch $i$ (future updates) writes a total of
$O(\beta^{i-1}t_u)=o(\beta^i)$ cells, these cells do not contain
enough information about the $\beta^i$ updates of epoch $i$ to be of
any use (recall the updates are random, thus there is still much
randomness left in epoch $i$ after seeing the cells written in epochs
$j<i$). Thus if the answer to a query depends on an update of epoch
$i$, then the query algorithm must probe a cell associated to epoch
$i$ to answer the query.

We note that Fredman and Saks also defined the notion of epochs over a
sequence of intermixed updates and queries. Here the epochs are
defined relative to each query, and from this approach they obtain
their amortized bounds.

\paragraph{P{\v a}tra{\c s}cu~\cite{patrascu07sum2D}.}
This technique uses the same setup as the chronogram technique,
i.e. one considers epochs $i=1,\dots,\lg_\beta n$ of updates, followed
by one query. The idea is to use a static $\Omega(\lg_\beta n)$ lower
bound proof to argue that the query algorithm must probe
$\Omega(\lg_\beta n)$ cells from each epoch if the update time is
$o((\lg n/\lg \lg n)^2)$, and not just one cell. Summing over all
epochs, this gives a lower bound of $\Omega(\lg_\beta^2 n)$. In the
following, we give a coarse overview of the general framework for
doing so.

One first proves a lower bound on the amount of communication in the
following (static) communication game (for every epoch $i$): Bob
receives all epochs of updates to the dynamic data structure problem
and Alice receives a set of queries and all updates of the epochs
preceding epoch $i$. The goal for them is to compute the answer to
Alice's queries after all the epochs of updates.

When such a lower bound has been established, one considers each epoch
$i$ in turn and uses the dynamic data structure to obtain an efficient
protocol for the above communication game between Alice and Bob. The
key idea is to let Alice simulate the query algorithm of the dynamic
data structure on each of her queries, and whenever a cell associated
to epoch $i$ is requested, she asks Bob for the contents. Bob replies
and she continues the simulation. Clearly the amount of communication
is proportional to the number of probes to cells associated to epoch
$i$, and thus a lower bound follows from the communication game lower
bound. The main difficulty in implementing this protocol is that Alice
must somehow recover the contents of the cells not associated to epoch
$i$ without asking Bob for it. This is accomplished by first letting
Bob send all cells associated to epochs $j<i$ to Alice. For
sufficiently large $\beta$, this does not break the communication
lower bound. To let Alice know which cells that belong to epoch $i$,
Bob also sends a Bloom filter specifying the addresses of the cells
associated to epoch $i$. A Bloom filter is a membership data
structure with a false positive probability. By setting the false
positive probability to $1/\lg^c n$ for a large enough constant $c>0$,
the Bloom filter can be send using $O(\lg \lg n)$ bits per cell
associated to epoch $i$. If $t_u = o((\lg n/\lg \lg n)^2)$, this
totals $o(\beta^i \lg^2n/\lg\lg n)$ bits.

Now Alice can execute the updates of the epochs preceding epoch $i$
(epochs $j>i$) herself, and she knows the cells (contents and
addresses) associated to epochs $j<i$. She also has a Bloom filter
specifying the addresses of the cells associated to epoch $i$. Thus to
answer her queries, she starts simulating the query algorithm. Each
time a cell is requested, she first checks if it is associated to
epochs $j<i$. If so, she has the contents herself and can continue the
simulation. If not, she checks the Bloom filter to determine whether
it belongs to epoch $i$. If the Bloom filter says no, the contents of
the cell was not updated during epochs $j \leq i$ and thus she has the
contents from the updates she executed initially. Finally, if the
Bloom filter says yes, she asks Bob for the contents. Clearly the
amount of communication is proportional to the number of probes to
cells associated to epoch $i$ plus some additional communication due
to the $t_q/\lg^cn$ false positives.

To get any lower bound out of this protocol, sending the Bloom filter
must cost less bits than it takes to describe the updates of epoch $i$
(Bob's input). This is precisely why the lower bound of P{\v a}tra{\c
  s}cu requires large weights assigned to the input points.

\paragraph{Our Technique.}
Our new technique elegantly circumvents the limitations of P{\v
  a}tra{\c s}cu's technique by exploiting recent ideas by Panigrahy et
al.~\cite{pani:metric} for proving static lower bounds. The basic
setup is the same, i.e. we consider epochs $i=1,\dots,\lg_\beta n$,
where the $i$'th epoch consists of $\beta^i$ updates. As with the two
previous techniques, we associate a cell to the epoch in which it was
last updates. Lower bounds now follow by showing that any data
structure must probe $\Omega(\lg_\beta n)$ cells associated to each
epoch $i$ when answering a query at the end of epoch $1$. Summing over
all $\lg_\beta n$ epochs, this gives us a lower bound of
$\Omega(\lg_\beta^2n)$.

To show that $\Omega(\lg_\beta n)$ probes to cells associated to an
epoch $i$ are required, we assume for contradiction that a data
structure probing $o(\lg_\beta n)$ cells associated to epoch $i$
exists. Using this data structure, we then consider a game between an
encoder and a decoder. The encoder receives as input the updates of
all epochs, and must from this send a message to the decoder. The
decoder then sees this message and all updates preceding epoch $i$
and must from this uniquely recover the updates of epoch $i$. If the
message is smaller than the entropy of the updates of epoch $i$ (conditioned on preceding epochs), this
gives an information theoretic contradiction. The trick is to find a
way for the encoder to exploit the small number of probed cells to
send a short message. 

As mentioned, we use the ideas in~\cite{pani:metric} to exploit the
small number of probes. In~\cite{pani:metric} it was observed that if
$S$ is a set of cells, and if the query algorithm of a data structure
probes $o(\lg_\beta n)$ cells from $S$ on average over all queries
(for large enough $\beta$), then there is a subset of cells $S'
\subseteq S$ which \emph{resolves} a large number of queries. Here we
say that a subset of cells $S' \subseteq S$ resolves a query, if the
query algorithm probes no cells in $S \setminus S'$ when answering
that query. What this observation gives us compared to the approach of
P{\v a}tra{\c s}cu, is that we can find a large set of queries that
are all resolved by the same small subset of cells associated to an
epoch $i$. Thus we no longer have to specify all cells associated to
epoch $i$, but only a small fraction.

With this observation in mind, the encoder proceeds as follows: First
he executes all the updates of all epochs on the claimed data
structure. He then sends all cells associated to epochs $j<i$. For
large enough $\beta$, this message is smaller than the entropy of the
$\beta^i$ updates of epoch $i$. Letting $S_i$ denote the cells
associated to epoch $i$, the encoder then finds a subset of cells
$S_i' \subseteq S_i$, such that a large number of queries are resolved
by $S_i'$. He then sends a description of those cells and proceeds by
finding a subset $Q$ of the queries resolved by $S_i'$, such that
knowing the answer to all queries in $Q$ reduces the entropy of the
updates of epoch $i$ by more than the number of bits needed to
describe $S_i',Q$ and the cells associated to epochs $j<i$. He then
sends a description of $Q$ followed by an encoding of the updates of
epoch $i$, conditioned on the answers to queries in $Q$. Since the
entropy of the updates of epoch $i$ is reduced by more bits than was
already send, this gives our contradiction (if the decoder can recover
the updates from the above messages).

To recover the updates of epoch $i$, the decoder first executes the
updates preceding epoch $i$. His goal is to simulate the query
algorithm for every query in $Q$ to recover all the answers. He
achieves this in the following way: For each cell $c$ requested when
answering a query $q \in Q$, he examines the cells associated to
epochs $j<i$ (those cells were send by the encoder), and if $c$ is
contained in one of those he immediately recovers the contents. If
not, he proceeds by examining the set $S_i'$. If $c$ is included in
this set, he has again recovered the contents and can continue the
simulation. Finally, if $c$ is not in $S_i'$, then $c$ must be
associated to an epoch preceding epoch $i$ (since queries in $Q$ probe
no cells in $S_i \setminus S_i'$), thus the decoder recovers the
contents of $c$ from the updates that he executed initially. In this
manner, the decoder can recover the answer to every query in $Q$, and
from the last part of the message he recovers the updates of epoch
$i$. 

The main technical challenge in using our technique lies in arguing
that if $o(\lg_\beta n)$ cells are probed amongst the cells associated
to epoch $i$, then the claimed cell set $S_i'$ and query set $Q$
exists.

In Section~\ref{sec:ex} we first use our technique to prove a lower
bound of $t_q = \Omega((\lg n/\lg(wt_u))^2)$ for an artificially
constructed range counting problem. This problem is tailored towards
giving as clean an introduction of our technique as possible. In
Section~\ref{sec:dyn} we then prove the main result, i.e. a lower
bound for dynamic weighted orthogonal range counting.

\section{An Artificial Range Counting Problem}
\label{sec:ex}
In the following, we design a range counting problem where the
queries have some very desirable properties. These properties ease the
lower bound proof significantly. We first describe the queries and
then give some intuition on why their properties ease the proof. The
queries are defined using the following lemma:

\begin{lemma}
\label{thm:hardqueries}
For $n$ sufficiently large and any prime $\Delta$, where $n^4/2 \leq
\Delta \leq n^4$, there exists a set $V$ of $n^2$ $\{0,1\}$-vectors in
$[\Delta]^n$, such that for any $\sqrt{n} \leq k \leq n$, it holds
that if we consider only the last $k$ coordinates of the vectors in
$V$, then any subset of up to $k/22\lg k$ vectors in $V$ are linearly
independent in $[\Delta]^k$.
\end{lemma}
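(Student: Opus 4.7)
The plan is to use a direct probabilistic construction: sample $V = \{v_1, \ldots, v_{n^2}\}$ by drawing each $v_j$ independently and uniformly from $\{0,1\}^n \subseteq [\Delta]^n$, and show that with positive probability the resulting set simultaneously satisfies the stated independence property for every integer $k \in [\sqrt{n}, n]$.

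Fix such a $k$ and set $s := \lfloor k/(22\lg k)\rfloor$. A subset of at most $s$ vectors in $V$ fails to be linearly independent over $\mathbb{Z}_\Delta$ (on the last $k$ coordinates) exactly when there exist $T \subseteq [n^2]$ with $|T|=t\le s$ and $\alpha \in (\mathbb{Z}_\Delta\setminus\{0\})^T$ such that $\sum_{j \in T} \alpha_j v_{j,i} \equiv 0 \pmod\Delta$ for each of the last $k$ coordinates $i$. For a fixed $(T,\alpha)$ and a single coordinate $i$, I would condition on all bits $v_{j,i}$ with $j \neq j_0$ for any chosen $j_0 \in T$; the remaining bit $v_{j_0,i}$ shifts the sum by either $0$ or $\alpha_{j_0}\not\equiv 0\pmod\Delta$, so because $\Delta$ is prime the two possible values are distinct modulo $\Delta$ and at most one of them is zero. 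Thus each coordinate vanishes with probability at most $1/2$, and by independence across the $n$ coordinates of the $v_j$'s, the restriction to the last $k$ coordinates vanishes with probability at most $2^{-k}$.

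Summing over pairs gives a failure probability (for the fixed $k$) of at most $\sum_{t=1}^{s}\binom{n^2}{t}(\Delta-1)^t\,2^{-k} \le \sum_{t=1}^{s} n^{6t}\,2^{-k}$, using $\binom{n^2}{t}\le n^{2t}$ and $\Delta-1\le n^4$. Since $k \ge \sqrt{n}$ implies $\lg k \ge (\lg n)/2$, any $t \le s$ satisfies $6t\lg n \le 12 t \lg k \le (6/11)k$, so each summand is at most $2^{-5k/11}$ and the whole inner sum is at most $k\cdot 2^{-5k/11}$. A final union bound over the $O(n)$ relevant values of $k$ gives a total failure probability of at most $n^2 \cdot 2^{-5\sqrt n/11} = o(1)$ as $n \to \infty$, so a set $V$ satisfying the lemma exists.

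The principal obstacle is pinning down the constants so the union bound retains a genuine exponential margin. The adversary gets $(\Delta-1)^t \le n^{4t}$ choices of coefficient vector and $\binom{n^2}{t}\le n^{2t}$ choices of support, for a combined $n^{6t}$ factor that must be beaten by the $2^{-k}$ gain from linear independence. This forces the support size $t$ to be $O(k/\lg n)$; the hypothesis $k \ge \sqrt n$ is used precisely to translate the cleaner bound $t \le k/(22 \lg k)$ into the needed bound $t \le k/(11\lg n)$, which in turn ensures the $n^{6t}$ factor loses to $2^{-k}$ by a constant-factor exponent.
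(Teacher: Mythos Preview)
Your proof is correct and takes essentially the same approach as the paper: both sample the vectors uniformly from $\{0,1\}^n$ and use a union bound whose dominant term is of order $n^{O(s)}\Delta^{O(s)}2^{-k}$ with $s = k/(22\lg k)$, using $k \ge \sqrt{n}$ to convert $\lg n$ into $\lg k$ so that the exponent stays negative. The only cosmetic difference is that the paper builds $V$ incrementally (bounding the probability that a fresh random vector lands in the span of an existing subset), whereas you sample all $n^2$ vectors at once and union-bound over pairs $(T,\alpha)$ witnessing a dependence; the resulting sums are the same up to constants.
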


We defer the (trivial) proof a bit and instead describe the artificial
range counting problem:

\paragraph{The Problem.}
For $n$ sufficiently large and any prime $\Delta$, where $n^4/2 \leq
\Delta \leq n^4$, let $V=\{v_0,\dots,v_{n^2-1}\}$ be a set of $n^2$
$\{0,1\}$-vectors with the properties of
Lemma~\ref{thm:hardqueries}. The set $V$ naturally defines a range
counting problem over a set of $n$ points: Each of $n$ input points
$p_0,\dots,p_{n-1}$ are assigned an integer weight amongst
$[\Delta]$. A query is specified by a vector $v_j$ in $V$ (or simply
an index $j \in [n^2]$) and the answer to the query is the sum of the
weights of those points $p_i$ for which the $i$'th coordinate of $v_j$
is $1$. An update is specified by an index $i \in [n]$ and an integer
weight $\delta \in [\Delta]$, and the effect of the update is to
change the weight of point $p_i$ to $\delta$. Initially, all weights
are $0$.

Observe that this range counting problem has $n^2$ queries and the
weights of points fit in $\lg \Delta \leq 4\lg n$ bits. Thus the
problem is similar in flavor to weighted orthogonal range
counting. Lemma~\ref{thm:hardqueries} essentially tells us that the
answers to any subset of queries reveal a lot information about the
weights of the $n$ input points (by the independence). Requiring that
the independence holds even when considering only the last $k$
coordinates is exploited in the lower bound proof to argue that the
answers to the queries reveal much information about any large enough
epoch $i$. Finally, recall from Section~\ref{sec:tech} that our new
technique requires us to encode a set of queries to simulate the query
algorithm for. Since encoding a query takes $\lg(|V|)=2\lg n$ bits, we
have chosen the weights to be $4\lg n$-bit integers, i.e. if we can
answer a query, we get more bits out than it takes to write down the
query.

\paragraph{Proof of Lemma~\ref{thm:hardqueries}.}
We prove this by a probabilistic argument. Let $n$ and $\Delta$ be
given, where $n^4/2 \leq \Delta \leq n^4$. Initialize $V$ to the empty
set. Clearly, for any $\sqrt{n} \leq k \leq n$, it holds that any
subset of up to $k/22\lg k$ vectors in $V$ are linearly independent in
$[\Delta]^k$ when considering only the last $k$ coordinates. We prove
that as long as $|V| < n^2$, we can find a $\{0,1\}$-vector whose
addition to the set $V$ maintains this property. For this, consider a
uniform random vector $v$ in $\{0,1\}^n$. For any $\sqrt{n} \leq k
\leq n$ and any fixed set $V'$ of up to $k/22\lg k$ vectors in $V$,
the probability that $v$ is in the span of $V'$ when considering only
the last $k$ coordinates is at most $\Delta^{|V'|}/2^k \leq 2^{(k/22\lg
  k) \cdot \lg \Delta-k} \leq 2^{-k/2}$. Since there are less than
$\sum_{i=1}^{k/22\lg k}\binom{n^2}{i} < (k/22\lg k)\binom{n^2}{k/22\lg
  k} \leq (k/22\lg k)\binom{k^8}{k/22\lg k} \leq (22e k^7\lg n)^{k/22
  \lg k+1} < 2^{k/3}$ such sets in $V$, it follows from a union bound
that with probability at least $1-2^{-k/6}$, $v$ will not be in the
span of any set of up to $k/22\lg k$ vectors in $V$ when considering
only the last $k$ coordinates. Finally, by a union bound over all
$\sqrt{n} \leq k \leq n$, it follows that there must exists a vector
that we can add to $V$, which completes the proof of
Lemma~\ref{thm:hardqueries}.

The remainder of this section is dedicated to proving a lower bound of
$t_q = \Omega((\lg n/\lg(wt_u))^2)$ for any data structure solving
this hard range counting problem. Here $t_u$ is the worst case update
time, $t_q$ is the average expected query time, $w$ the cell size and
$n$ the number of points. The proof carries most of the ideas used in
the proof of the main result.

\subsection{The Lower Bound Proof}
The first step is to design a hard distribution over updates, followed
by one uniform random query. We then lower bound the expected cost
(over the distribution) of answering the query for any
\emph{deterministic} data structure with worst case update time
$t_u$. By fixing the random coins (Yao's
principle~\cite{Yao:principle}), this translates into a lower bound on
the expected average query time of a possibly randomized data
structure.

\paragraph{Hard Distribution.}
The hard distribution is extremely simple: For $i=0,\dots,n-1$ (in
this order), we simply set the weight of point $p_i$ to a uniform
random integer $\rdelta_i \in [\Delta]$. Following these updates, we
ask a uniform random query $\vs \in V$.

We think of the updates as divided into \emph{epochs} of exponentially
decreasing size. More specifically, we define epoch $1$ as consisting
of the last $\beta$ updates (the updates that set the weights of
points $p_{n-\beta},\dots,p_{n-1}$), where $\beta \geq 2$ is a
parameter to be fixed later. For $2 \leq i < \lg_\beta n$, epoch $i$
consists of the $\beta^i$ updates that precede epoch $i-1$. Finally,
we let epoch $\lg_\beta n$ consists of the $n-\sum_{i=1}^{\lg_\beta
  n-1} \beta^i$ first updates.

For notational convenience, we let $\U_i$ denote the random variable
giving the sequence of updates performed in epoch $i$ and
$\U=\U_{\lg_\beta},\dots,\U_1$ the random variable giving the updates
of all epochs. Also, we let $\vs$ denote the random variable giving the
uniform random query in $V$.

\paragraph{A Chronogram Approach.}
Having defined the hard distribution over updates and queries, we now
give a high-level proof of the lower bound. Assume a deterministic
data structure solution exists with worst case update time $t_u$. From
this data structure and a sequence of updates $\U$, we define $S(\U)$
to be the set of cells stored in the data structure after executing
the updates $\U$. Now associate each cell in $S(\U)$ to the last epoch
in which its contents were updated, and let $S_i(\U)$ denote the
subset of $S(\U)$ associated to epoch $i$ for $i=1,\dots,\lg_\beta
n$. Also let $t_i(\U,v_j)$ denote the number of cells in $S_i(\U)$
probed by the query algorithm of the data structure when answering the
query $v_j \in V$ after the sequence of updates $\U$. Finally, let
$t_i(\U)$ denote the average cost of answering a query $v_j \in \V$
after the sequence of updates $\U$, i.e. let $t_i(\U) = \sum_{v_j \in
  V}t_i(\U,v_j)/n^2$. Then the following holds:

\begin{lemma}
\label{lem:exprobes}
If $\beta = (w t_u)^2$, then $\E[t_i(\U,\vs)] = \Omega(\lg_\beta n)$
for all $\tfrac{2}{3} \lg_\beta n \leq i < \lg_\beta n$.
\end{lemma}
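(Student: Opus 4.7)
I plan to prove the lemma by an information-theoretic encoding argument that exploits the cell-sampling idea of Panigrahy et al. Suppose for contradiction that $\E[t_i(\U,\vs)] < c_0 \lg_\beta n$ for a sufficiently small constant $c_0>0$. I will construct, for each outcome of $\U$, a prefix-free message that together with $\U_{>i}$ uniquely determines $\U_{\leq i}$, and whose expected length over $\U$ is strictly below $H(\U_{\leq i}\mid \U_{>i}) = k\lg\Delta$, where $k=\sum_{j\leq i}\beta^j$. Since conditional entropy lower-bounds expected code length, this is the desired contradiction.

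For a ``typical'' $\U$ with $t_i(\U)=O(\lg_\beta n)$ (which occurs with constant probability by Markov), at least $|V|/2$ queries $v$ satisfy $t_i(\U,v)=O(\lg_\beta n)$. I sample each cell of $S_i(\U)$ independently with probability $1/\alpha$ to form $S_i'\subseteq S_i(\U)$. For each such $v$ the probability that all of its probes to $S_i(\U)$ fall inside $S_i'$ is $\alpha^{-O(\lg_\beta n)}$, so by averaging there is a realization with $|S_i'|=O(\beta^i t_u/\alpha)$ for which the number of \emph{resolved} queries (those probing no cell in $S_i(\U)\setminus S_i'$) is $\Omega(|V|\alpha^{-O(\lg_\beta n)})$. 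Choosing $\alpha=\beta^{1/c_1}$ with $c_1$ small makes this count polynomial in $n$, so I can pick a subset $Q$ of resolved queries with $|Q|=k/(22\lg k)$; by Lemma~\ref{thm:hardqueries}, the vectors in $Q$ restricted to the last $k$ coordinates -- exactly the coordinates affected by epochs $\leq i$ -- are linearly independent in $[\Delta]^k$.

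The encoder then sends (a) every cell in $\bigcup_{j<i}S_j(\U)$ with address and content, (b) the addresses and contents of $S_i'$, (c) the indices of $Q$ inside $V$, and (d) an arithmetic-coded description of $\U_{\leq i}$ conditioned on the $|Q|$ answers that the data structure would return on $Q$. The decoder first executes $\U_{>i}$ to recover the memory state just before epoch $i$; for each $v\in Q$ he simulates the query algorithm, handling each probe by first checking the sets supplied in (a) and (b) and otherwise using his $\U_{>i}$ state. Because $v$ is resolved by $S_i'$, no probe ever needs a cell of $S_i(\U)\setminus S_i'$, so he recovers the correct answer for every $v\in Q$. Subtracting the known $\U_{>i}$-contributions leaves $|Q|$ linearly independent constraints on $\U_{\leq i}$, which combined with (d) pins down $\U_{\leq i}$, and hence $\U_i$.

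The main obstacle will be the arithmetic verifying $\E[|M|] < k\lg\Delta$; the choice $\beta=(wt_u)^2$ is exactly what makes this accounting work. Term (a) contributes only $O(\beta^{i-1}t_u w)=O(\beta^i/(wt_u))=o(|Q|\lg\Delta)$; with $c_1$ small, term (b) is dominated by (a); and term (c) costs $|Q|\cdot 2\lg n$, strictly less than the $|Q|\lg\Delta \geq |Q|(4\lg n-1)$ bits we save in (d) by conditioning on the answers -- this slack is precisely why the artificial problem was calibrated with $4\lg n$-bit weights against $2\lg n$-bit query indices. Thus for typical $\U$ the encoding beats $k\lg\Delta$ by $\Omega(\beta^i)$ bits; for atypical $\U$ the encoder falls back to sending $\U_{\leq i}$ verbatim at cost $k\lg\Delta$ (with a one-bit flag to distinguish modes); averaging then yields $\E[|M|]<k\lg\Delta$, the desired contradiction.
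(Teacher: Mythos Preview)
Your proposal is correct and follows essentially the same route as the paper: cell-sample $S_i(\U)$ to find a small subset resolving many queries, pick $\Theta(\beta^i/\lg n)$ of them whose restrictions to the last $k$ coordinates are independent by Lemma~\ref{thm:hardqueries}, have the decoder simulate those queries using the sample together with $\bigcup_{j<i}S_j(\U)$ and the post-$\U_{>i}$ state, and encode $\U_{\le i}$ in the remaining $(k-|Q|)\lg\Delta$ bits. The only cosmetic differences are that the paper samples a uniform fixed-size subset of exactly $\beta^{i-1}$ cells (its Lemma~\ref{lem:exstatic}) rather than Bernoulli-sampling, and it makes your step~(d) explicit by completing the query vectors to a basis of $[\Delta]^k$ and transmitting the remaining inner products; note also that your phrase ``$c_1$ small'' should read ``$c_1\le 1$ with $c_0$ small relative to $c_1$'' for both the sample-size bound and the resolved-query count to come out right.
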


Before giving the proof of Lemma~\ref{lem:exprobes}, we show that it
implies our lower bound: Let $\beta$ be as in
Lemma~\ref{lem:exprobes}. Since the cell sets $S_{\lg_\beta
  n}(\U),\dots,S_1(\U)$ are disjoint, we get that the number of cells
probed when answering the query $\vs$ is $\sum_i t_i(\U,\vs)$. It now
follows immediately from linearity of expectation that the expected
number of cells probed when answering $\vs$ is $\Omega(\lg_\beta n
\cdot \lg_\beta n)=\Omega((\lg n/\lg(wt_u))^2)$, which completes the
proof.

The hard part thus lies in proving Lemma~\ref{lem:exprobes}, i.e. in
showing that the random query must probe many cells associated to each
of the epochs $i=\tfrac{2}{3} \lg_\beta n,\dots,\lg_\beta n-1$.

\paragraph{Bounding the Probes to Epoch $i$.}
As also pointed out in Section~\ref{sec:tech}, we prove
Lemma~\ref{lem:exprobes} using an encoding argument. Assume for
contradiction that there exists a data structure solution such that
under the hard distribution, with $\beta=(wt_u)^2$, there exists an
epoch $\tfrac{2}{3} \lg_\beta n \leq \is < \lg_\beta n$, such that the
claimed data structure satisfies $\E[t_\is(\U,\vs)] = o(\lg_\beta n)$.

First observe that $\U_\is$ is independent of $\U_{\lg_\beta n} \cdots
\U_{\is+1}$, i.e. $H(\U_\is \mid \U_{\lg_\beta n} \cdots
\U_{\is+1})=H(\U_\is)$, where $H(\cdot)$ denotes binary Shannon
entropy. Furthermore, we have $H(\U_\is)=\beta^\is \lg \Delta$, since
the updates of epoch $\is$ consists of changing the weight of
$\beta^\is$ fixed points, each to a uniform random weight amongst the
integers $[\Delta]$. Our goal is to show that, conditioned on
$\U_{\lg_\beta n} \cdots \U_{\is+1}$, we can use the claimed data
structure solution to encode $\U_\is$ in less than $H(\U_\is)$ bits in
expectation, i.e. a contradiction. We view this encoding step as a
game between an encoder and a decoder. The encoder receives as input
the sequence of updates $\U=\U_{\lg_\beta n},\dots,\U_1$. The encoder
now examines these updates and from them sends a message to the
decoder (an encoding). The decoder sees this message, as well as
$\U_{\lg_\beta n}, \dots, \U_{\is+1}$ (we conditioned on these
variables), and must from this uniquely recover $\U_\is$. If we can
design a procedure for constructing and decoding the encoder's
message, such that the expected size of the message is less than
$H(\U_\is)=\beta^\is \lg \Delta$ bits, then we have reached a
contradiction.

Before presenting the encoding and decoding procedures, we show
exactly what breaks down if the claimed data structure probes too few
cells from epoch $\is$:

\begin{lemma}
\label{lem:exstatic}
Let $\tfrac{2}{3}\lg_\beta n \leq i < \lg_\beta n$ be an epoch. If
$t_i(\U)=o(\lg_\beta n)$, then there exists a subset of cells $C_i(\U)
\subseteq S_i(\U)$ and a set of queries $Q(\U) \subseteq V$ such that:
\begin{enumerate}
\item $|C_i(\U)| = O(\beta^{i-1})$.
\item $|Q(\U)| = \Omega(n)$.
\item The query algorithm of the data structure solution probes no
  cells in $S_i(\U) \setminus C_i(\U)$ when answering a query $v_j \in
  Q(\U)$ after the sequence of updates $\U$.
\end{enumerate}
\end{lemma}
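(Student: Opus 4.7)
The plan is to prove Lemma~\ref{lem:exstatic} by a cell-sampling argument in the spirit of Panigrahy et al.~\cite{pani:metric}. I would construct $C_i(\U)$ randomly: independently keep each cell of $S_i(\U)$ with probability $p = 1/(\beta t_u)$ and drop the rest. Define $Q(\U) \subseteq V$ to be the set of queries $v_j$ that are \emph{resolved} by $C_i(\U)$, meaning that every cell in $S_i(\U)$ probed by the query algorithm on input $v_j$ (after the update sequence $\U$) lies in $C_i(\U)$. The goal is then to show that with positive probability over the sampling, both $|C_i(\U)| = O(\beta^{i-1})$ and $|Q(\U)| = \Omega(n)$ hold simultaneously; any such realization yields the deterministic sets claimed by the lemma, and condition~3 is automatic by construction.

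For the size of $C_i(\U)$, epoch $i$ consists of $\beta^i$ updates, each probing at most $t_u$ cells, so $|S_i(\U)| \leq \beta^i t_u$. Hence $\E[|C_i(\U)|] = p\,|S_i(\U)| \leq \beta^{i-1}$, and Markov's inequality gives $\Pr[|C_i(\U)| > 2\beta^{i-1}] \leq 1/2$.

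For the size of $Q(\U)$, a query $v_j$ is resolved precisely when none of its $t_i(\U,v_j)$ probes into $S_i(\U)$ is dropped, which occurs independently with probability $p^{t_i(\U,v_j)}$. Summing over $v_j \in V$ and applying Jensen's inequality to the convex function $x \mapsto p^x$,
\[
\E[|Q(\U)|] \;=\; \sum_{v_j \in V} p^{t_i(\U,v_j)} \;\geq\; |V|\cdot p^{\,t_i(\U)} \;=\; n^2 \cdot p^{\,t_i(\U)}.
\]
Since $\beta = (wt_u)^2$, one has $\lg(1/p) = \lg(\beta t_u) = O(\lg(wt_u))$, while the hypothesis gives $t_i(\U) = o(\lg_\beta n) = o(\lg n/\lg(wt_u))$; multiplying, $t_i(\U)\lg(1/p) = o(\lg n)$ and thus $\E[|Q(\U)|] \geq n^{2-o(1)}$. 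Combined with the trivial bound $|Q(\U)| \leq n^2$, a one-line reverse-Markov computation shows $\Pr[|Q(\U)| < cn] = o(1)$ for any constant $c > 0$.

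A union bound therefore leaves positive probability that both $|C_i(\U)| \leq 2\beta^{i-1}$ and $|Q(\U)| \geq cn$, so such $C_i(\U)$ and $Q(\U)$ exist. The one place that needs care is calibrating $p$: it must be small enough that Markov's inequality caps $|C_i(\U)|$ at $O(\beta^{i-1})$, yet large enough that the Jensen bound on $\E[|Q(\U)|]$ still dominates $n$; the choice $p = 1/(\beta t_u)$ together with $\beta = (wt_u)^2$ is what makes the two inequalities meet, and everything else reduces to routine asymptotic bookkeeping.
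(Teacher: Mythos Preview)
Your overall plan---random cell sampling in $S_i(\U)$ followed by an expectation argument---is exactly the paper's approach, and your use of Jensen on $x\mapsto p^x$ is a clean substitute for the paper's two-step computation (Markov to isolate queries with $t_i(\U,v_j)\le \tfrac14\lg_\beta n$, then a hypergeometric bound). However, one step is wrong as written.

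The claim ``reverse Markov gives $\Pr[|Q(\U)|<cn]=o(1)$'' is false. From $\E[|Q(\U)|]\ge n^{2-o(1)}$ and $|Q(\U)|\le n^2$ you only get
\[
\Pr\bigl[|Q(\U)|\ge cn\bigr]\;\ge\;\frac{\E[|Q(\U)|]-cn}{n^2}\;=\;n^{-o(1)},
\]
which may itself be $o(1)$; it certainly need not leave probability $>\tfrac12$. Hence your union bound with the Markov estimate $\Pr[|C_i(\U)|>2\beta^{i-1}]\le\tfrac12$ does not establish positive probability for the intersection.

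Two easy repairs. First, since $|C_i(\U)|$ is a sum of independent Bernoulli$(p)$ variables with mean at most $\beta^{i-1}\ge n^{2/3}/\beta=n^{\Omega(1)}$, Chernoff gives $\Pr[|C_i(\U)|>2\beta^{i-1}]\le e^{-n^{\Omega(1)}}$; then $\Pr[\,|C_i(\U)|\le 2\beta^{i-1}\text{ and }|Q(\U)|\ge cn\,]\ge n^{-o(1)}-e^{-n^{\Omega(1)}}>0$. Second---and this is what the paper does---sample a \emph{fixed-size} uniformly random subset $C_i(\U)\subseteq S_i(\U)$ of size exactly $\beta^{i-1}$. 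Condition~1 then holds deterministically, and it suffices to show $\E[|Q(\U)|]=\Omega(n)$ (the paper gets $\Omega(n^{3/2})$ via the hypergeometric calculation; your Jensen argument adapts as well), which immediately yields a realization with $|Q(\U)|=\Omega(n)$. Either fix completes your proof.
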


\begin{proof}
Pick a uniform random set $C_i'(\U)$ of $\beta^{i-1}$ cells in
$S_i(\U)$. Now consider the set $Q'(\U)$ of those queries $v_j$ in
$V$ for which $t_i(\U,v_j) \leq \tfrac{1}{4}\lg_\beta n$. Since
$t_i(\U)=o(\lg_\beta n)$ is the average of $t_i(\U,v_j)$ over all
queries $v_j$, it follows from Markov's inequality that $|Q'(\U)| =
\Omega(|V|) = \Omega(n^2)$. Let $v_j$ be a query in $Q'(\U)$. The
probability that all cells probed from $S_i(\U)$ when answering $v_j$
are also in $C_i'(\U)$ is precisely 
\begin{eqnarray*}
\frac{\binom{|S_i(\U)|-t_i(\U,v_j)}{\beta^{i-1}-t_i(\U,v_j)}}{\binom{|S_i(\U)|}{\beta^{i-1}}}
&=& \frac{\beta^{i-1}(\beta^{i-1}-1) \cdots
  (\beta^{i-1}-t_i(\U,v_j)+1)}{|S_i(\U)|(|S_i(\U)|-1) \cdots (|S_i(\U)|-t_i(\U,v_j)+1)} \\
&\geq& \frac{\beta^{i-1}(\beta^{i-1}-1) \cdots
  (\beta^{i-1}-\tfrac{1}{4} \lg_\beta n+1)}{\beta^it_u(\beta^it_u-1) \cdots (\beta^it_u-\tfrac{1}{4}\lg_\beta n+1)} \\
&\geq& \left(\frac{\beta^{i-1}-\tfrac{1}{4} \lg_\beta n}{\beta^it_u}\right)^{\tfrac{1}{4} \lg_\beta n} \\
&\geq& \left(\frac{1}{2\beta t_u}\right)^{\tfrac{1}{4} \lg_\beta n} \\
&\geq& 2^{-\tfrac{1}{2} \lg_\beta n \lg \beta} \\
&=& n^{-\tfrac{1}{2}}.
\end{eqnarray*}
It follows that the expected number of queries in $Q'(\U)$ that probe
only cells in $C_i'(\U)$ is $n^{3/2}$ and hence there must exist a set
satisfying the properties in the lemma.
\end{proof}

The contradiction that this lemma intuitively gives us, is that the
queries in $Q(\U)$ reveal more information about $\U_i$ than the bits
in $C_i(\U)$ can describe (recall the independence properties of the
queries in Lemma~\ref{thm:hardqueries}). We now present the encoding
and decoding procedures:

\paragraph{Encoding.}
Let $\tfrac{2}{3} \lg_\beta n \leq \is < \lg_\beta n$ be the epoch for
which $\E[t_\is(\U,\vs)] = o(\lg_\beta n)$. We construct the message
of the encoder by the following procedure:
\begin{enumerate}
\item First the encoder executes the sequence of updates $\U$ on the
  claimed data structure, and from this obtains the sets $S_{\lg_\beta
    n}(\U),\dots,S_1(\U)$. He then simulates the query algorithm on
  the data structure for every query $v_j \in V$. From this, the
  encoder computes $t_\is(\U)$ (just the average number of cells in
  $S_\is(\U)$ that are probed).
\item If $t_\is(\U) > 2\E[t_\is(\U,\vs)]$, then the encoder writes a
  $1$-bit, followed by $\lceil \beta^\is \lg \Delta\rceil =
  H(\U_\is)+O(1)$ bits, simply specifying each weight assigned to a
  point during the updates $\U_\is$ (this can be done in the claimed
  amount of bits by interpreting the weights as one big integer in
  $[\Delta^{\beta^\is}]$).  This is the complete message send to the
  decoder when $t_\is(\U) > 2\E[t_\is(\U,\vs)]$.
\item If $t_\is(\U) \leq 2\E[t_\is(\U,\vs)]$, then the encoder first
  writes a $0$-bit. Now since $t_\is(\U) \leq 2\E[t_\is(\U,\vs)] =
  o(\lg_\beta n)$, we get from Lemma~\ref{lem:exstatic} that there
  must exist a set of cells $C_\is(\U) \subseteq S_\is(\U)$ and a set
  of queries $Q(\U) \subseteq V$ satisfying the properties in
  Lemma~\ref{lem:exstatic}. The encoder finds such sets $C_\is(\U)$
  and $Q(\U)$ simply by trying all possible sets in some arbitrary but
  fixed order (given two candidate sets $C_\is'(\U)$ and $Q'(\U)$ it
  is straight forward to verify whether they satisfy the properties of
  Lemma~\ref{lem:exstatic}). The encoder now writes down the set
  $C_\is(\U)$, including addresses and contents, for a total of at
  most $O(w)+2|C_\is(\U)|w$ bits (the $O(w)$ bits specifies
  $|C_\is(\U)|$). Following that, he picks
  $\beta^\is/22\lg(\beta^\is)$ arbitrary vectors in $Q(\U)$ (denote
  this set $V'$) and writes down their indices in $V$. This costs
  another $(\beta^\is/22\lg(\beta^\is))\lg(|V|) \leq
  (\beta^\is/22\lg(n^{2/3}))\lg(n^2) = \tfrac{3}{22}\beta^\is$ bits.

\item The encoder now constructs a set $\VecSet$ of vectors in
  $[\Delta]^{k_\is}$, where $k_\is=\sum_{i=1}^\is \beta^\is$ is the
  total size of all epochs $j \leq \is$. He initialized this set by
  first constructing the set of vectors $V'_{k_\is}$ consisting of the
  vectors in $V'$ restricted onto the last $k_\is$ coordinates. He
  then sets $\VecSet = V'_{k_\is}$ and continues by iterating through
  all vectors in $[\Delta]^{k_\is}$, in some arbitrary but fixed
  order, and for each such vector $x=(x_0,\dots,x_{k_\is-1})$,
  checks whether $x$ is in $\vecspan(\VecSet)$. If not, the encoder
  adds $x$ to $\VecSet$. This process continues until
  $\dim(\vecspan(\VecSet))=k_\is$. Now let
  $\uv=(\uv_0,\dots,\uv_{k_\is-1})$ be the $k_\is$-dimensional vector
  with one coordinate for each weight assigned during the last $k_\is$
  updates, i.e. the $i$'th coordinate, $\uv_i$, is given by
  $\uv_i=\rdelta_{n-k_\is+i}$ for $i=0,\dots,k_\is-1$. The encoder now
  computes and writes down $(\langle \uv,x\rangle \mod \Delta)$ for
  each $x$ that was added to $\VecSet$. Here $\langle \uv,x \rangle =
  \sum_i \uv_ix_i$ denotes the standard inner product. Since
  $\dim(\vecspan(V'_{k_\is}))=|V'_{k_\is}| =
  \beta^\is/22\lg(\beta^\is)$ (by Lemma~\ref{thm:hardqueries}), this
  adds a total of $\lceil (k_\is-\beta^\is/22\lg(\beta^\is)) \lg
  \Delta \rceil$ bits to the message.

\item Finally, the encoder writes down all of the cell sets
  $S_{\is-1}(\U),\dots,S_1(\U)$, including addresses and
  contents. This takes at most $\sum_{j=1}^{\is-1}(2|S_j(\U)|w+O(w))$
  bits. When this is done, the encoder sends the constructed message
  to the decoder.
\end{enumerate}

Before analyzing the size of the encoding, we show how the decoder
recovers $\U_\is$ from the above message.

\paragraph{Decoding.}
In the following, we describe the decoding procedure. The decoder
receives as input the updates $\U_{\lg_\beta n},\dots,\U_{\is+1}$ (the
encoding is conditioned on these variables) and the message from the
encoder. The decoder now recovers $\U_\is$ by the following procedure:

\begin{enumerate}
\item The decoder examines the first bit of the message. If this bit
  is $1$, then the decoder immediately recovers $\U_\is$ from the
  encoding (step 2 in the encoding procedure). If not, the decoder instead
  executes the updates $\U_{\lg_\beta n} \cdots \U_{\is+1}$ on
  the claimed data structure solution and obtains the cells sets
  $S_{\lg_\beta n}^{\is+1}(\U),\dots,S_{\is+1}^{\is+1}(\U)$ where
  $S_j^{\is+1}(\U)$ contains the cells
  that were last updated during epoch $j$ when executing updates $\U_{\lg_\beta n}, \dots,
  \U_{\is+1}$ (and not the entire sequence of updates
  $\U_{\lg_\beta n}, \dots, \U_1$).

\item The decoder now recovers $V',C_\is(\U)$ and
  $S_{\is-1}(\U),\dots,S_1(\U)$ from the encoding. For each query $v_j
  \in V'$, the decoder then computes the answer to $v_j$ as if all
  updates $\U_{\lg_\beta n},\dots, \U_1$ had been performed. The
  decoder accomplishes this by simulating the query algorithm on each
  $v_j$, and for each cell requested, the decoder recovers the
  contents of that cell as it would have been if all updates
  $\U_{\lg_\beta n},\dots, \U_1$ had been performed. This is done in
  the following way: When the query algorithm requests a cell $c$, the
  decoder first determines whether $c$ is in one of the sets
  $S_{\is-1}(\U),\dots,S_1(\U)$. If so, the correct contents of $c$
  (the contents after the updates $\U=\U_{\lg_\beta n}, \dots, \U_1$)
  is directly recovered. If $c$ is not amongst these cells, the
  decoder checks whether $c$ is in $C_\is(\U)$. If so, he has again
  recovered the contents. Finally, if $c$ is not in $C_\is(\U)$, then
  from point 3 of Lemma~\ref{lem:exstatic}, we get that $c$ is not in
  $S_\is(\U)$. Since $c$ is not in any of $S_\is(\U),\dots,S_1(\U)$,
  this means that the contents of $c$ has not changed during the
  updates $\U_\is, \dots, \U_1$, and thus the decoder finally recovers
  the contents of $c$ from $S_{\lg_\beta
    n}^{\is+1}(\U),\dots,S_{\is+1}^{\is+1}(\U)$. The decoder can
  therefore recover the answer to each query $v_j$ in $V'$ if it had
  been executed after the sequence of updates $\U_{\lg_\beta n},
  \dots, \U_1$, i.e. for all $v_j \in V'$, he knows $\sum_{i=0}^{n-1}
  v_{j,i} \cdot \rdelta_i$, where $v_{j,i}$ denotes the $i$'th
  coordinate of $v_j$.

\item The next decoding step consists of computing for each query
  $v_j$ in $V'$, the value $\sum_{i=0}^{k_\is-1} v_{j,n-k_\is+i} \cdot
  \uv_i$. Note that this value is precisely the answer to the query
  $v_j$ if all weights assigned during epochs $\lg_\beta
  n,\dots,\is+1$ were set to $0$. Since we conditioned on
  $\U_{\lg_\beta n} \cdots \U_{\is+1}$ the decoder computes this value
  simply by subtracting $\sum_{i=0}^{n-k_\is-1} v_{j,i} \cdot \rdelta_i$
  from $\sum_{i=0}^{n-1} v_{j,i} \cdot \rdelta_i$ (the first sum can
  be computed since $\rdelta_i$ is given from $\U_{\lg_\beta n} \cdots
  \U_{\is+1}$ when $i \leq n-k_\is - 1$).

\item Now from the query set $V'$, the decoder construct the set of
  vectors $\VecSet=V'_{k_\is}$, and then iterates through all vectors
  in $[\Delta]^{k_\is}$, in the same fixed order as the encoder. For
  each such vector $x$, the decoder again verifies whether $x$ is in
  $\vecspan(\VecSet)$, and if not, adds $x$ to $\VecSet$ and recovers
  $(\langle x,\uv \rangle \mod \Delta)$ from the encoding. The decoder
  now constructs the $k_\is \times k_\is$ matrix $\Matrix$, having the
  vectors in $\VecSet$ as rows. Similarly, he construct the vector
  $\z$ having one coordinate for each row of $\Matrix$. The coordinate
  of $\z$ corresponding to a row vector $x$, has the value $(\langle
  x, \uv \rangle \mod \Delta)$. Note that this value is already known
  to the decoder, regardless of whether $x$ was obtained by
  restricting a vector $v_j$ in $V'$ onto the last $k_\is$ coordinates
  (simply taking modulo $\Delta$ on the value $\sum_{i=0}^{k_\is-1}
  v_{j,n-k_\is+i} \cdot \uv_i$ computed for the vector $v_j$ in $V'$
  from which $x$ was obtained), or was added later. Since $\Matrix$
  has full rank, and since the set $[\Delta]$ endowed with integer
  addition and multiplication modulo $\Delta$ is a finite field, it
  follows that the linear system of equations $\Matrix \otimes \y = \z$ has a
  unique solution $\y \in [\Delta]^{k_\is}$ (here $\otimes$ denotes
  matrix-vector multiplication modulo $\Delta$). But $\uv \in
  [\Delta]^{k_\is}$ and $\Matrix \otimes \uv = \z$, thus the decoder
  now solves the linear system of equations $\Matrix \otimes \y = \z$
  and uniquely recovers $\uv$, and therefore also $\U_\is \cdots
  \U_1$. This completes the decoding procedure.
\end{enumerate}

\paragraph{Analysis.}
We now analyse the expected size of the encoding of
$\U_\is$. We first analyse the size of the encoding when
$t_\is(\U) \leq 2\E[t_\is(\U,\vs)]$. In this case, the encoder sends a
message of
\begin{eqnarray*}
2|C_\is(\U)|w+\tfrac{3}{22}\beta^\is+(k_\is-\beta^\is/22\lg(\beta^\is)) \lg
  \Delta
+\sum_{j=1}^{\is-1}2|S_j(\U)|w+O(w \lg_\beta n)
\end{eqnarray*}
bits. Since $H(\U_\is) = \beta^\is \lg \Delta = k_\is \lg \Delta -
\sum_{j=1}^{\is-1}\beta^j \lg \Delta$,
$|C_\is(\U)|w=O(\beta^{\is-1}w)$, $\lg \Delta = O(w)$ and $|S_j(\U)|
\leq \beta^j t_u$, the above is upper bounded by
\[
H(\U_\is)+\tfrac{3}{22}\beta^\is-(\beta^\is/22\lg(\beta^\is))\lg\Delta
+O\left(\sum_{j=1}^{\is-1}\beta^jwt_u\right). 
\]
Since $\beta \geq 2$, we also have $O\left(\sum_{j=1}^{\is-1} \beta^j
wt_u\right) = O\left(\beta^{\is-1} wt_u
\right)=o(\beta^\is)$. Finally, we have $\lg \Delta \geq \lg n^4-1 \geq 
4\lg(\beta^\is)-1$. Therefore, the above is again upper bounded by
\[
H(\U_\is)+\tfrac{3}{22}\beta^\is-\tfrac{4}{22}\beta^\is+o(\beta^\is)=H(\U_\is)-\Omega(\beta^{\is}).
\]
This part thus contributes at most
\[
\Pr[t_\is(\U) \leq 2 \E[t_\is(\U,\q)]] \cdot
(H(\U_\is)-\Omega(\beta^{\is}))
\]
bits to the expected size of the encoding. The case where
$t_\is(\U)>2\E[t_\is(\U,\vs)]$ similarly contributes $ \Pr[t_\is(\U) >
  2 \E[t_\is(\U,\vs)]] \cdot (H(\U_\is)+O(1)) $ bits to the expected
size of the encoding. Now since $\vs$ is uniform, we have
$\E[t_\is(\U)]=\E[t_\is(\U,\vs)]$, we therefore get from Markov's
inequality that $\Pr[t_\is(\U) > 2 \E[t_\is(\U,\vs)]]<\tfrac{1}{2}$.
Therefore the expected size of the encoding is upper bounded by $
O(1)+\tfrac{1}{2}H(\U_\is)+\tfrac{1}{2}(H(\U_\is)-\Omega(\beta^{\is}))
< H(\U_\is) $ bits, finally leading to the contradiction and
completing the proof of Lemma~\ref{lem:exprobes}.

\section{Weighted Orthogonal Range Counting}
\label{sec:dyn}
In this section we prove our main result, which we have formulated in
the following theorem:
\begin{theorem}
\label{thm:main}
Any data structure for dynamic weighted orthogonal range counting in
the cell probe model, must satisfy $t_q = \Omega((\lg n/\lg(w
t_u))^2)$. Here $t_q$ is the expected average query time and $t_u$ the
worst case update time. This lower bound holds when the weights of the
inserted points are $\Theta(\lg n)$-bit integers.
\end{theorem}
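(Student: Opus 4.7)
My plan is to follow the same epoch-and-encoding framework developed for the artificial problem in Section~\ref{sec:ex}, now with the abstract query set $V$ replaced by a family of dominance queries against a carefully chosen point configuration. The chronogram step, the structural Lemma~\ref{lem:exstatic}, and the encoder/decoder pair all transfer almost verbatim; the entire difficulty is concentrated in producing a geometric analogue of Lemma~\ref{thm:hardqueries}.

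First I would fix a point configuration: the $i$-th inserted point sits at $(i,\pi(i))$ for a permutation $\pi$ of $[n]$ chosen below. The hard distribution inserts the points in order $p_1,\ldots,p_n$, each with an independent uniform $\Theta(\lg n)$-bit weight $\rdelta_i\in[\Delta]$ for a prime $\Delta$ with $n^4/2\le\Delta\le n^4$. Epochs are defined exactly as before, with $\beta=(wt_u)^2$. Crucially, restricting attention to the last $k$ inserted points corresponds geometrically to the vertical slab $\{x>n-k\}$, so a dominance query $(a,b)$ restricted to that slab produces a $\{0,1\}^k$ incidence vector determined by $(\min(a,n)-(n-k),b)$. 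The total number of geometrically distinct dominance queries against $\{p_1,\ldots,p_n\}$ is $\Theta(n^2)$, which matches the cardinality needed for $V$ in Lemma~\ref{thm:hardqueries}.

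The main technical obstacle is to show there exist $\pi$ and a set $V$ of $\Omega(n^2)$ dominance queries such that, for every $\sqrt n\le k\le n$, the incidence vectors of queries in $V$ restricted to the last $k$ points satisfy the property that any subfamily of size $k/22\lg k$ is linearly independent in $[\Delta]^k$. I would attack this probabilistically, choosing $\pi$ uniformly at random and greedily selecting queries in the style of Lemma~\ref{thm:hardqueries}'s proof. The key sub-lemma to establish is that, for uniform random $\pi$ and every fixed family of at most $k/22\lg k$ previously chosen restricted incidence vectors, a random restricted dominance vector lies outside the $[\Delta]$-span of that family with probability $1-2^{-\Omega(k)}$. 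The structural constraint that dominance vectors come from a product order (rather than being arbitrary $\{0,1\}^k$ vectors) is what makes this harder than Lemma~\ref{thm:hardqueries}; I would handle it by exploiting the randomness of $\pi$ to argue that the $\Omega(k^2)$ distinct restricted dominance vectors are combinatorially rich enough to behave like generic vectors with respect to small linear spans, combined with a union bound over the $\binom{n^2}{k/22\lg k}\cdot\Delta^{k/22\lg k}$ candidate spans analogous to the original calculation.

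With the geometric $V$ in hand, Lemma~\ref{lem:exstatic} and the encoder/decoder from Section~\ref{sec:ex} transfer unchanged. The encoder sends $S_{\is-1}(\U),\ldots,S_1(\U)$, a resolving subset $C_\is(\U)\subseteq S_\is(\U)$ of size $O(\beta^{\is-1})$, a linearly independent family $V'\subseteq Q(\U)$ of size $\beta^\is/22\lg(\beta^\is)$, and the inner products $\langle\uv,x\rangle\bmod\Delta$ needed to extend $V'$ to a basis of $[\Delta]^{k_\is}$. The decoder simulates the query algorithm on each $v_j\in V'$ using property~3 of Lemma~\ref{lem:exstatic}, converts the answers into inner products over the last $k_\is$ weights by subtracting the already-known contributions from the fixed earlier epochs, and solves the resulting full-rank linear system over the finite field $[\Delta]$ to recover $\uv$ and hence $\U_\is$. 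The accounting is identical to Section~\ref{sec:ex} and yields an expected encoding of $H(\U_\is)-\Omega(\beta^\is)$ bits, contradicting Shannon's bound. Summing the resulting $\E[t_\is(\U,\vs)]=\Omega(\lg_\beta n)$ over the $\Theta(\lg_\beta n)$ middle-range epochs gives $t_q=\Omega((\lg n/\lg(wt_u))^2)$.
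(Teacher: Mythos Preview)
Your proposal has a genuine gap at exactly the point you flag as the ``main technical obstacle'': the geometric analogue of Lemma~\ref{thm:hardqueries} you ask for is simply false, and no choice of $\pi$ can rescue it. A dominance query restricted to the last $k$ points produces an incidence vector determined entirely by two thresholds, one in $x$ and one in $y$; hence there are at most $(k+1)^2$ distinct restricted incidence vectors, regardless of $\pi$. Since you want $|V|=\Omega(n^2)$ and you need the property for $k$ as small as $\beta^{\is}\le n^{1-\Omega(1)}$, pigeonhole forces many pairs of queries in $V$ to have \emph{identical} restrictions, and any such pair is already a dependent subfamily of size two. So the statement ``any subfamily of size $k/22\lg k$ is linearly independent when restricted'' cannot hold. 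The probabilistic argument you sketch, patterned on the proof of Lemma~\ref{thm:hardqueries}, relied essentially on having $2^k$ candidate restricted vectors; with only $O(k^2)$ candidates the union bound has no chance.

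The paper does not attempt to manufacture a query set with a universal independence property. Instead it changes the hard distribution so that each epoch $i$ inserts its own scaled Fibonacci lattice $F_{\beta^i}$, and it replaces Lemma~\ref{lem:exstatic} by a single Lemma~\ref{lem:static} that simultaneously produces the resolving cell set $C_i$ and a query set $Q$ whose epoch-$i$ incidence vectors are linearly independent. Proving this requires two new ingredients: a geometric criterion (Lemma~\ref{lem:spread}) showing that queries hitting many cells of a suitably proportioned grid yield many independent incidence vectors against the Fibonacci lattice, and a \emph{second, nested} encoding argument (Lemma~\ref{lem:hitmany}) establishing that the queries resolved by any small $C_i$ must in fact be well spread across such a grid. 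This is also why the paper needs $\beta=(wt_u)^9$, $|C_i|=O(\beta^{i-1}w)$, $|Q|=\Omega(\beta^{i-3/4})$, and the restriction $i\ge\tfrac{15}{16}\lg_\beta n$, none of which match the parameters you imported from Section~\ref{sec:ex}. Your encoder/decoder and chronogram skeleton are fine, but the heart of the argument has to be replaced along these lines.
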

As in Section~\ref{sec:ex}, we prove Theorem~\ref{thm:main} by
devising a hard distribution over updates, followed by one uniform
random query. We then lower bound the expected cost (over the
distribution) of answering the query for any \emph{deterministic} data
structure with worst case update time $t_u$. In the proof we assume
the weights are $4 \lg n$-bit integers and note that the lower bound
applies to any $\eps \lg n$-bit weights, where $\eps > 0$ is an
arbitrarily small constant, simply because a data structure for $\eps
\lg n$-bit integer weights can be used to solve the problem for any
$O(\lg n)$-bit integer weights with a constant factor overhead by
dividing the bits of the weights into $\lceil \delta/(\eps \lg n)
\rceil = O(1)$ chunks and maintaining a data structure for each
chunk. We begin the proof by presenting the hard distribution over
updates and queries.

\paragraph{Hard Distribution.}
Again, updates arrive in epochs of exponentially decreasing
size. For $i=1,\dots,\lg_\beta n$ we define epoch $i$ as a sequence of
$\beta^i$ updates, for a parameter $\beta>1$ to be fixed later. The
epochs occur in time from biggest to smallest epoch, and at the end of
epoch $1$ we execute a uniform random query in $[n] \times [n]$.

What remains is to specify which updates are performed in each epoch
$i$. The updates of epoch $i$ are chosen to mimic the hard input
distribution for static orthogonal range counting on a set of
$\beta^i$ points.  We first define the following point set known as
the Fibonacci lattice:

\begin{definition}[\cite{matousek:discrepency}]
The Fibonacci lattice $F_m$ is the set of $m$ two-dimensional points
defined by $F_m=\{(i,if_{k-1} \mod m) \mid i=0,\dots,m-1\}$, where
$m=f_k$ is the $k$'th Fibonacci number.
\end{definition}

The $\beta^i$ updates of epoch $i$ now consists of inserting each
point of the Fibonacci lattice $F_{\beta^i}$, but scaled to fit the
input region $[n] \times [n]$, i.e. the $j$'th update of epoch $i$
inserts the point with coordinates $(n/\beta^i \cdot j,n/\beta^i \cdot
(jf_{k_i-1} \mod \beta^i))$, for $j=0,\dots,\beta^i$. The weight of
each inserted point is a uniform random integer amongst
$[\Delta]$, where $\Delta$ is the largest prime number
smaller than $2^{4 \lg n}=n^4$. This concludes the description of the hard
distribution.

The Fibonacci lattice has the desirable property that it is very
uniform. This plays an important role in the lower bound proof, and we
have formulated this property in the following lemma:

\begin{lemma}[\cite{fibonacci}]
\label{lem:fibarea}
For the Fibonacci lattice $F_{\beta^i}$, where the coordinates of each
point have been multiplied by $n/\beta^i$, and for $\alpha>0$, any
axis-aligned rectangle in $[0,n-n/\beta^i] \times [0,n-n/\beta^i]$
with area $\alpha n^2/\beta^i$ contains between $\lfloor \alpha / a_1
\rfloor$ and $\lceil \alpha/a_2 \rceil$ points, where $a_1 \approx
1.9$ and $a_2 \approx 0.45$.
\end{lemma}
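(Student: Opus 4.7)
The plan is to exploit the extremely uniform structure of the Fibonacci lattice, which comes from the fact that $f_{k-1}/f_k$ is the best rational approximation to $1/\phi$, the golden ratio. After rescaling, we can think of the problem in the $\{0,1,\dots,\beta^i-1\}^2$ grid where the lattice is $L = \{(j, jf_{k-1} \bmod \beta^i) : j \in [\beta^i]\}$ and an axis-aligned rectangle of area $\alpha n^2/\beta^i$ in the original scale corresponds to a rectangle of area $\alpha \beta^i$ in the integer grid. Note that $\gcd(f_{k-1}, f_k)=1$, so the $x$-coordinates (and by symmetry the $y$-coordinates) of $L$ form a permutation of $[\beta^i]$, which means each axis-aligned strip of width one (in the rescaled picture) contains exactly one point.

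First, I would reduce the count of $L \cap R$ for $R = [a,b) \times [c,d)$ to a one-dimensional counting problem: it equals the number of $j \in [a,b)$ with $jf_{k-1} \bmod \beta^i \in [c,d)$. In other words, we count how often the arithmetic progression $\{j f_{k-1} \bmod \beta^i\}$ lands in a target interval of length $d-c$ as $j$ ranges over an interval of length $b-a$. The main tool for this is the three-distance (three-gap) theorem applied to the sequence $\{j f_{k-1}/\beta^i\}_{j=0}^{\beta^i - 1}$: because the continued-fraction expansion of $f_{k-1}/\beta^i$ consists of $1$'s all the way down, the gaps between consecutive values have at most three distinct sizes, and all of these are within a constant factor of $1/f_k$. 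Iterating the three-gap picture (or equivalently reading off the Ostrowski representation of the indices) gives tight upper and lower bounds on the count in terms of $(b-a)(d-c)/\beta^i = \alpha$.

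The key quantitative step is to identify the extremal rectangles. The worst-case densities arise from rectangles whose sides are aligned with the two shortest lattice vectors of $L$, which after rescaling have aspect ratios governed by $\phi$. A lower bound of $\lfloor \alpha / a_1 \rfloor$ comes from showing that any rectangle of area $\alpha n^2/\beta^i$ whose edges are axis-aligned cannot avoid the lattice unless its shorter dimension is smaller than the shortest gap in the corresponding one-dimensional projection; unpacking the extremal case gives $a_1 \approx 1.9 \approx \phi + 1/\phi^3$-type expressions. The upper bound $\lceil \alpha / a_2 \rceil$ is obtained symmetrically by packing as many lattice points as possible into an axis-aligned rectangle, with $a_2 \approx 0.45$ arising from an extremal configuration exploiting the golden-ratio geometry of the lattice.

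The main obstacle is pinning down the exact constants $a_1 \approx 1.9$ and $a_2 \approx 0.45$, rather than an asymptotic $\Theta(1)$ bound. A loose statement of the form ``the count is between $c_1 \alpha$ and $c_2 \alpha$ for some constants $c_1, c_2 > 0$'' follows readily from the classical $O((\log m)/m)$ discrepancy bound for the Fibonacci lattice combined with Koksma's inequality, but that type of estimate leaves the error additive in $\log m$ and the constants unspecified. Obtaining the clean multiplicative form stated in the lemma requires doing the above three-distance analysis carefully across all possible shapes of axis-aligned rectangles (tall-thin, short-wide, near-square) and taking the worst case, matching the detailed low-discrepancy analysis in the cited reference~\cite{fibonacci}.
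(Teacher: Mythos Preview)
The paper does not prove this lemma at all: it is quoted verbatim as a known result from the cited reference~\cite{fibonacci} (Fiat and Shamir), so there is no ``paper's own proof'' to compare against. Your sketch---reducing to a one-dimensional three-gap/Ostrowski analysis of the sequence $\{jf_{k-1}/f_k\}$ and exploiting that the continued-fraction expansion of $f_{k-1}/f_k$ is all ones---is the standard route to results of this type and is broadly correct in spirit, though you are right that nailing down the specific constants $a_1\approx 1.9$ and $a_2\approx 0.45$ requires the detailed case analysis carried out in the original reference rather than anything in this paper.
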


Note that we assume each $\beta^i$ to be a Fibonacci number (denoted
$f_{k_i}$), and that each $\beta^i$ divides $n$. These assumptions can
easily be removed by fiddling with the constants, but this would only
clutter the exposition.

For the remainder of the paper, we let $\U_i$ denote the random
variable giving the sequence of updates in epoch $i$, and we let
$\U=\U_{\lg_\beta n}\cdots\U_1$ denote the random variable giving all
updates of all $\lg_\beta n$ epochs. Finally, we let $\q$ be the
random variable giving the query.

\paragraph{A Chronogram.}
Having defined the hard distribution over updates and queries, we
proceed as in Section~\ref{sec:ex}. Assume a deterministic data
structure solution exists with worst case update time $t_u$. From this
data structure and a sequence of updates $\U=\U_{\lg_\beta
  n},\dots,\U_1$, we define $S(\U)$ to be the set of cells stored in
the data structure after executing the updates $\U$. Associate each
cell in $S(\U)$ to the last epoch in which its contents were updated,
and let $S_i(\U)$ denote the subset of $S(\U)$ associated to epoch $i$
for $i=1,\dots,\lg_\beta n$. Also let $t_i(\U,q)$ denote the number of
cells in $S_i(\U)$ probed by the query algorithm of the data structure
when answering the query $q \in [n] \times [n]$ after the sequence of
updates $\U$. Finally, let $t_i(\U)$ denote the average cost of
answering a query $q \in [n] \times [n]$ after the sequence of updates
$\U$, i.e. let $t_i(\U) = \sum_{q \in [n] \times [n]}t_i(\U,q)/n^2$. Then
the following holds:

\begin{lemma}
\label{lem:probes}
If $\beta = (w t_u)^9$, then $\E[t_i(\U,\q)] =
\Omega(\lg_\beta n)$ for all $i \geq \tfrac{15}{16} \lg_\beta n$.
\end{lemma}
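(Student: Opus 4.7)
The plan is to adapt the encoder/decoder strategy that proved Lemma~\ref{lem:exprobes} to the geometry of dominance queries on the Fibonacci lattice, replacing the explicitly designed independence of Lemma~\ref{thm:hardqueries} by a geometric counterpart derived from Lemma~\ref{lem:fibarea}. I would assume for contradiction that there exists an epoch $\is$ with $\tfrac{15}{16}\lg_\beta n \leq \is < \lg_\beta n$ for which $\E[t_\is(\U,\q)]=o(\lg_\beta n)$; the goal is to encode $\U_\is$, conditioned on $\U_{\lg_\beta n},\dots,\U_{\is+1}$, in expected size strictly below $H(\U_\is)=\beta^\is\lg\Delta$ bits, contradicting the fact that $\U_\is$ is independent of the later epochs and uniform over $[\Delta]^{\beta^\is}$. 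The first step is a direct analog of Lemma~\ref{lem:exstatic}: whenever $t_\is(\U)=o(\lg_\beta n)$, there exist a set $C_\is(\U)\subseteq S_\is(\U)$ of size $O(\beta^{\is-1})$ and a set of queries $Q(\U)\subseteq [n]\times[n]$ of size $\Omega(n^2)$ such that every query in $Q(\U)$ probes no cells in $S_\is(\U)\setminus C_\is(\U)$. Its proof is the same Markov-plus-random-subset argument as Lemma~\ref{lem:exstatic}, and the stronger choice $\beta=(wt_u)^9$ leaves ample slack to absorb the geometric overheads that appear below.

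The crucial new step, and the main obstacle, is to extract from $Q(\U)$ a subfamily $V'$ of queries whose incidence vectors over the $k_\is=\sum_{j\leq\is}\beta^j$ points inserted during epochs $j\leq\is$ span a subspace of dimension $|V'|=\Omega(\beta^\is)$ in $[\Delta]^{k_\is}$. In Section~\ref{sec:ex} the linear independence was handed to us by Lemma~\ref{thm:hardqueries}, but here the queries are fixed geometric dominance queries and independence must be argued directly from the lattice. I would use Lemma~\ref{lem:fibarea} to attach to each point $p$ of epoch $\is$ a canonical ``isolator'' rectangle of area $\Theta(n^2/\beta^\is)$ whose queries dominate $p$ but none of the later epoch-$\is$ points in a chosen sweep order; by the uniformity guarantee these isolators are of nearly equal size and tile, up to lower-order effects, a constant fraction of $[n]\times[n]$. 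Since $|Q(\U)|=\Omega(n^2)$ and $\is\geq \tfrac{15}{16}\lg_\beta n$ ensures the number of isolators $\Theta(\beta^\is)$ is much smaller than $n^2$, an averaging argument yields $\Omega(\beta^\is)$ isolators each containing at least one query of $Q(\U)$; picking one such query per isolator produces an upper-triangular epoch-$\is$ incidence matrix, and thus $|V'|=\Omega(\beta^\is)$ linearly independent incidence vectors over the finite field $[\Delta]$. Extending $V'_{k_\is}$ to a full-rank basis of $[\Delta]^{k_\is}$ so as to also cover the weights of epochs $j<\is$ is then identical to the deterministic extension step in Section~\ref{sec:ex}.

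Given such a $V'$, the remainder parallels Section~\ref{sec:ex}. If $t_\is(\U)>2\E[t_\is(\U,\q)]$ the encoder spends $H(\U_\is)+O(1)$ bits on the raw weights; otherwise it writes a $0$-bit followed by the addresses and contents of $C_\is(\U)$, the indices of $V'$ among $[n]\times[n]$, the $\lceil(k_\is-|V'|)\lg\Delta\rceil$ auxiliary inner products needed to complete the basis, and the cell sets $S_{\is-1}(\U),\dots,S_1(\U)$. The decoder first executes $\U_{\lg_\beta n},\dots,\U_{\is+1}$ itself, then simulates the query algorithm on each $q\in V'$, serving every cell request either from $S_{<\is}(\U)$, from $C_\is(\U)$, or, when the cell is in neither and hence not in $S_\is(\U)$ by resolution, from its own post-later-epochs cell state; subtracting the (computable) contribution of epochs $j>\is$ from each answer yields the inner product of $\uv$ with the restriction of the incidence vector to the last $k_\is$ coordinates modulo $\Delta$, and combining with the auxiliary inner products gives a full-rank linear system over $[\Delta]$ that uniquely recovers $\uv$ and hence $\U_\is\cdots\U_1$. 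The bit accounting then mirrors Section~\ref{sec:ex}: the indices of $V'$ cost $O(\beta^\is)$ bits, the smaller cell sets cost $O(\beta^{\is-1}wt_u)=o(\beta^\is)$ bits by the large choice of $\beta$, and the $|V'|=\Omega(\beta^\is)$ saved inner products save $\Omega(\beta^\is\lg\Delta)=\Omega(\beta^\is\lg n)$ bits over the trivial encoding, giving an expected total strictly below $H(\U_\is)$ and hence the required contradiction.
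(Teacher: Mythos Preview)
Your proposal has a genuine gap at the isolator step. The claim that the Markov-plus-random-subset argument yields $|Q(\U)|=\Omega(n^2)$ is incorrect: following the computation in Lemma~\ref{lem:exstatic}, a query with $t_\is(\U,q)\le c\lg_\beta n$ lands entirely inside a random sample of $\beta^{\is-1}$ cells with probability only $(\Theta(1)/(\beta t_u))^{c\lg_\beta n}=n^{-\Theta(c)}$, so the sample resolves at most $n^{2-\Theta(1)}$ queries. (This is exactly why Lemma~\ref{lem:exstatic} itself only claims $|Q(\U)|=\Omega(n)$, not $\Omega(n^2)$; the larger $\beta$ does not change this.) Even granting $n^{2-o(1)}$ via Jensen, the $o(1)$ is inherited from the hypothesis $t_\is(\U)=o(\lg_\beta n)$ and can decay as slowly as $1/\lg\lg n$. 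With only $n^{2-c}$ resolved queries, nothing prevents all of $Q(\U)$ from sitting inside a sub-rectangle of area $n^{2-c}$, which meets at most $O(\beta^{\is}/n^{c})$ of your isolators; the resulting savings of $O(\beta^{\is}n^{-c}\lg n)$ bits is then swamped by the $\Theta(\beta^{\is-1}wt_u)=\beta^{\is}/\mathrm{poly}(wt_u)$-bit overhead for transmitting $C_\is(\U)$ and the later cell sets. In short, in Section~\ref{sec:ex} any $|V'|$ queries were automatically independent by Lemma~\ref{thm:hardqueries}, but here independence requires the resolved queries to be geometrically spread out, and cell sampling alone does not force that.

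This is precisely the obstacle the paper's proof addresses with an additional layer of argument. Rather than averaging over a single tiling, the paper introduces a hierarchy of grids $G_2,\dots,G_{2i-2}$ of varying aspect ratio and proves, via a \emph{second} encoding argument (Lemma~\ref{lem:hitmany}) applied to a random well-separated query set, that the resolved set $Q_{C_i}$ must hit $\Omega(\beta^{i-3/4})$ distinct cells of some $G_j$. Only after this spreading guarantee is in hand does the triangularization step (Lemma~\ref{lem:spread}, which is essentially your isolator idea carried out inside one grid) extract $\Omega(\beta^{i-3/4})$ linearly independent incidence vectors; the remainder of the encoding then proceeds as you outline. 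Your plan collapses these two steps into a single averaging over isolators, and that is what the adversary can defeat.
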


The lemma immediately implies Theorem~\ref{thm:main} since the cell
sets $S_{\lg_\beta n}(\U),\dots,S_1(\U)$ are disjoint and the number
of cells probed when answering the query $\q$ is $\sum_i
t_i(\U,\q)$. We prove Lemma~\ref{lem:probes} in the following section.

\subsection{Bounding the Probes to Epoch $i$}
The proof of Lemma~\ref{lem:probes} is again based on an encoding
argument. The framework is identical to Section~\ref{sec:ex}, but
arguing that a ``good'' set of queries to simulate exists is
significantly more difficult.

Assume for contradiction that there exists a data structure solution
such that under the hard distribution, with $\beta=(wt_u)^9$, there
exists an epoch $\is \geq \tfrac{15}{16} \lg_\beta n$, such that the
claimed data structure satisfies $\E[t_\is(\U,\q)] = o(\lg_\beta n)$.

First observe that $\U_\is$ is independent of $\U_{\lg_\beta n} \cdots
\U_{\is+1}$, i.e. $H(\U_\is \mid \U_{\lg_\beta n} \cdots
\U_{\is+1})=H(\U_\is)$. Furthermore, we have $H(\U_\is)=\beta^\is \lg
\Delta$, since the updates of epoch $\is$ consists of inserting
$\beta^\is$ fixed points, each with a uniform random weight amongst
the integers $[\Delta]$. Our goal is to show that, conditioned on
$\U_{\lg_\beta n} \cdots \U_{\is+1}$, we can use the claimed data
structure solution to encode $\U_\is$ in less than $H(\U_\is)$ bits in
expectation, which provides the contradiction.

Before presenting the encoding and decoding procedures, we show what
happens if a data structure probes too few cells from epoch $\is$. For
this, we first introduce some terminology. For a query point $q=(x,y)
\in [n] \times [n]$, we define for each epoch $i=1,\dots,\lg_\beta n$
the \emph{incidence vector} $\inc{i}(q)$, as a $\{0,1\}$-vector in
$[\Delta]^{\beta^i}$. The $j$'th coordinate of $\inc{i}(q)$ is $1$ if
the $j$'th point inserted in epoch $i$ is dominated by $q$, and $0$
otherwise. More formally, for a query $q=(x,y)$, the $j$'th coordinate
$\inc{i}(q)_j$ is given by:
\[
\inc{i}(q)_j = \left \{ \begin{array}{l l}
  1 & \quad \text{if }jn/\beta^i \leq x \wedge (jf_{k_i-1}\textrm{ mod } \beta^i)n/\beta^i\leq y\\
  0 & \quad \text{otherwise} 
  \end{array} \right.
\]

Similarly, we define for a sequence of updates $\U_i$, the
$\beta^i$-dimensional vector $\uv_i$ for which the $j$'th coordinate
equals the weight assigned to the $j$'th inserted point in $\U_i$. We
note that $\U_i$ and $\uv_i$ uniquely specify each other, since
$\U_i$ always inserts the same fixed points, only the weights vary.

Finally observe that the answer to a query $q$ after a sequence of
updates $\U_{\lg_\beta n},\dots, \U_1$ is $\sum_{i=1}^{\lg_\beta n}
\langle \inc{i}(q), \uv_i\rangle$.  With these definitions, we now
present the main result forcing a data structure to probe many cells
from each epoch:
\begin{lemma}
\label{lem:static}
Let $i \geq \tfrac{15}{16}\lg_\beta n$ be an epoch. If
$t_i(\U)=o(\lg_\beta n)$, then there exists a subset of cells $C_i(\U)
\subseteq S_i(\U)$ and a set of query points $Q(\U) \subseteq [n] \times
          [n]$ such that:
\begin{enumerate}
\item $|C_i(\U)| = O(\beta^{i-1}w)$.
\item $|Q(\U)| = \Omega(\beta^{i-3/4})$.
\item The set of incidence vectors $\inc{i}(Q(\U)) = \{ \inc{i}(q)
  \mid q \in Q(\U)\}$ is a linearly independent set of vectors in
  $[\Delta]^{\beta^i}$.
\item The query algorithm of the data structure solution probes no
  cells in $S_i(\U) \setminus C_i(\U)$ when answering a query $q \in
  Q(\U)$ after the sequence of updates $\U$.
\end{enumerate}
\end{lemma}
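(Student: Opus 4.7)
The plan is to adapt the encoding framework of Lemma~\ref{lem:exstatic}, replacing the automatic linear-independence guarantee of Lemma~\ref{thm:hardqueries} with a geometric argument that exploits the uniformity of the Fibonacci lattice (Lemma~\ref{lem:fibarea}). First, Markov's inequality on the hypothesis $t_i(\U) = o(\lg_\beta n)$ yields a good query set $Q_0 \subseteq [n] \times [n]$ of size at least $(1-1/K)n^2$, each of whose members satisfies $t_i(\U, q) = O(\lg_\beta n)$. Let $C'$ be a uniformly random subset of $S_i(\U)$ of size $\beta^{i-1}w$. Using $|S_i(\U)| \leq \beta^i t_u$, the probability that a query $q \in Q_0$ is resolved by $C'$ (meaning all of its probes into $S_i(\U)$ land inside $C'$) is at least $(|C'|/|S_i(\U)|)^{t_i(\U,q)} \geq (w/(\beta t_u))^{O(\lg_\beta n)}$. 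With $\beta = (wt_u)^9$ one has $\lg(\beta t_u/w) \leq (10/9)\lg\beta$, so this probability is at least $n^{-c}$ for a suitable constant $c$.

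For the linear-independence guarantee I would partition $[n] \times [n]$ into an $m \times m$ grid with $m$ chosen slightly smaller than $\sqrt{\beta^i/a_1}$, so that by Lemma~\ref{lem:fibarea} every grid cell contains at least one lattice point of epoch~$i$. For the corner queries $q_{a,b}=(an/m,bn/m)$, the four-term alternating sum
\[ \inc{i}(q_{a,b}) - \inc{i}(q_{a-1,b}) - \inc{i}(q_{a,b-1}) + \inc{i}(q_{a-1,b-1}) \]
equals the $\{0,1\}$-indicator of the lattice points in cell $(a,b)$, which is non-zero. Since these $m^2$ box-indicator vectors have pairwise disjoint support they are linearly independent, which forces the corner incidence vectors to span an $\Omega(\beta^i)$-dimensional subspace; in particular $\Theta(\beta^i)$ of them are linearly independent. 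To couple this with resolution by $C'$, for each grid cell $c$ I would choose a representative $r(c) \in Q_0 \cap c$ lying in a geometrically restricted sub-region of $c$ (determined by the Fibonacci points inside $c$ and its neighbors), chosen so that $r(c)$ contributes a ``private witness'' lattice point unique to the row of $c$ in the incidence matrix. Ordering cells lexicographically makes the resulting incidence matrix triangular with non-zero diagonal, hence the representatives are linearly independent, and a pigeonhole argument using $|Q_0|\geq (1-1/K)n^2$ shows that most cells admit such a representative.

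Given these $\Omega(\beta^i)$ chosen representatives, each is resolved by $C'$ with probability at least $n^{-c}$, so linearity of expectation yields $\Omega(\beta^i \cdot n^{-c})$ resolved representatives in expectation, which by calibrating the constants (balancing $\beta=(wt_u)^9$ against the allowed probe count $O(\lg_\beta n)$ and $i \geq \tfrac{15}{16}\lg_\beta n$) meets the target $\Omega(\beta^{i-3/4})$. Fixing a realization of $C'$ by averaging yields the required sets $C_i(\U)$ and $Q(\U)$. The main obstacle is precisely this coupling: the random cell sample resolves many queries in $Q_0$, and the Fibonacci box argument makes many incidence vectors linearly independent, but forcing both properties to hold for a \emph{single} set of queries is delicate because the grid corners are specific positions that need not lie in $Q_0$. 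Carrying out the sub-region representative step --- ensuring that each cell admits a query which is good \emph{and} preserves a private lattice witness for the triangular structure --- is where Lemma~\ref{lem:fibarea} is essential and where the technical heart of the proof is expected to lie.
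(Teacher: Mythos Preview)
Your approach has a genuine quantitative gap in the final counting step. You have at most $m^2 = \Theta(\beta^i) \leq n$ representatives (one per grid cell, and at most $\beta^i$ incidence vectors can be independent in $[\Delta]^{\beta^i}$), and each is resolved by the random sample $C'$ with probability at most $(|C'|/|S_i(\U)|)^{t_i(\U,q)} \leq (w/(\beta t_u))^{\Theta(t_i(\U))}$. Even using $t_i(\U,q)=o(\lg_\beta n)$, with $\beta=(wt_u)^9$ this probability is only $n^{-o(1)}$, so the expected number of resolved representatives is $\beta^i \cdot n^{-o(1)}$. The target $\Omega(\beta^{i-3/4})=\beta^i/\beta^{3/4}$ would require $n^{o(1)} \leq \beta^{3/4}$; but in the regime of interest ($w=\Theta(\lg n)$, $t_u$ polylogarithmic) $\beta^{3/4}$ is polylogarithmic, whereas $n^{o(1)}$ need not be. Concretely, if $t_i(\U)=\lg_\beta n/\lg\lg n$ (certainly $o(\lg_\beta n)$), the resolution probability is $n^{-\Theta(1/\lg\lg n)}$ and the expected count falls short of $\beta^{i-3/4}$ by a factor $n^{\Theta(1/\lg\lg n)}/\beta^{3/4}\to\infty$. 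The obstacle you flag as ``the coupling'' is not a matter of choosing sub-regions carefully; the numbers do not close. This is exactly why the direct argument of Lemma~\ref{lem:exstatic} worked there (it had $n^2$ queries available against a target of $\Omega(n)$) but cannot work here.

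The paper's route is structurally different. Instead of sampling cells at random and hoping enough independent queries survive, it draws $\beta^{i-1}$ random queries (one per vertical slab of width $n/\beta^{i-1}$) and takes $C_i$ to be \emph{exactly} the cells of $S_i(\U)$ they probe, so all $\beta^{i-1}$ are resolved by construction and $|C_i|\leq 4\beta^{i-1}t_i(\U)$. The work then shifts to showing that the full set $Q_{C_i}$ of queries resolved by this $C_i$ is geometrically spread out. This is done by a second, nested encoding argument over a \emph{family} of $\Theta(\lg_\beta n)$ grids $G_2,\dots,G_{2i-2}$ of varying aspect ratio but common cell area $n^2/\beta^i$: if $Q_{C_i}$ hit only $o(\beta^{i-3/4})$ cells in \emph{every} $G_j$, one could localize each random query by telescoping its position through the grids and thereby encode the random query set below its entropy. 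Once some $G_j$ has hitting number $\Omega(\beta^{i-3/4})$, a row/column crossing-out argument on that single grid (your ``private witness'' intuition, made precise via Lemma~\ref{lem:fibarea}) extracts the linearly independent subset. The multi-grid telescoping and the inner encoding are the ideas your proposal is missing.
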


Comparing to Lemma~\ref{lem:exstatic}, it is not surprising that this
lemma gives the lower bound. We note that Lemma~\ref{lem:static}
essentially is a generalization of the results proved in the static
range counting papers~\cite{patrascu07sum2D,larsen:median}, simply
phrased in terms of cell subsets answering many queries instead of
communication complexity. Since the proof contains only few new ideas,
we have deferred it to Section~\ref{sec:static} and instead move on to
the encoding and decoding procedures.

\paragraph{Encoding.}
Let $\is \geq \tfrac{15}{16} \lg_\beta n$ be the epoch for which
$\E[t_\is(\U,\q)] = o(\lg_\beta n)$. The encoding procedure follows
that in Section~\ref{sec:ex} uneventfully:
\begin{enumerate}
\item First the encoder executes the sequence of updates $\U$ on the
  claimed data structure, and from this obtains the sets $S_{\lg_\beta
    n}(\U),\dots,S_1(\U)$. He then simulates the query algorithm on the
  data structure for every query $q \in [n] \times [n]$. From this,
  the encoder computes $t_\is(\U)$.
\item If $t_\is(\U) > 2\E[t_\is(\U,\q)]$, then the encoder writes a
  $1$-bit, followed by $\lceil \beta^\is \lg \Delta\rceil =
  H(\U_\is)+O(1)$ bits, simply specifying each weight assigned to a
  point in $\U_\is$.  This is the complete message send to the decoder
  when $t_\is(\U) > 2\E[t_\is(\U,\q)]$.
\item If $t_\is(\U) \leq 2\E[t_\is(\U,\q)]$, then the encoder first
  writes a $0$-bit. Now since $t_\is(\U) \leq 2\E[t_\is(\U,\q)] =
  o(\lg_\beta n)$, we get from Lemma~\ref{lem:static} that there must
  exist a set of cells $C_\is(\U) \subseteq S_\is(\U)$ and a set of
  queries $Q(\U) \subseteq [n] \times [n]$ satisfying $1$-$4$ in
  Lemma~\ref{lem:static}. The encoder finds such sets $C_\is(\U)$ and
  $Q(\U)$ simply by trying all possible sets in some arbitrary but
  fixed order. The encoder now writes down these two sets,
  including addresses and contents of the cells in $C_\is(\U)$, for a
  total of at most $O(w)+2|C_\is(\U)|w+\lg \binom{n^2}{|Q(\U)|}$ bits (the
  $O(w)$ bits specifies $|C_\is(\U)|$ and $|Q(\U)|$).

\item The encoder now constructs a set $\VecSet$, such that
  $\VecSet=\inc{\is}(Q(\U)) = \{ \inc{\is}{q} \mid q \in Q(\U) \}$
  initially. Then he iterates through all vectors in
  $[\Delta]^{\beta^\is}$, in some arbitrary but fixed order, and for
  each such vector $x$, checks whether $x$ is in
  $\vecspan(\VecSet)$. If not, the encoder adds $x$ to $\VecSet$. This
  process continues until $\dim(\vecspan(\VecSet))=\beta^\is$, at
  which point the encoder computes and writes down $(\langle x,\uv_\is
  \rangle \mod \Delta)$ for each $x$ that was added to
  $\VecSet$. Since $\dim(\vecspan(\inc{\is}(Q(\U))))=|Q(\U)|$ (by
  point 3 in Lemma~\ref{lem:static}), this adds a total of $\lceil
  (\beta^\is-|Q(\U)|) \lg \Delta \rceil$ bits to the message.

\item Finally, the encoder writes down all of the cell sets
  $S_{\is-1}(\U),\dots,S_1(\U)$, including addresses and contents,
  plus all of the vectors $\uv_{\is-1},\dots, \uv_1$. This takes at
  most $\sum_{j=1}^{\is-1}(2|S_j(\U)|w+\beta^j \lg \Delta+O(w))$
  bits. When this is done, the encoder sends the constructed message
  to the decoder.
\end{enumerate}

Next we present the decoding procedure:

\paragraph{Decoding.} The decoder receives as input the updates $\U_{\lg_\beta n},\dots,\U_{\is+1}$ and the message from the encoder. The decoder now recovers $\U_\is$  by the following procedure:

\begin{enumerate}
\item The decoder examines the first bit of the message. If this bit
  is $1$, then the decoder immediately recovers $\U_\is$ from the
  encoding (step 2 in the encoding procedure). If not, the decoder
  instead executes the updates $\U_{\lg_\beta n} \cdots \U_{\is+1}$ on
  the claimed data structure solution and obtains the cells sets
  $S_{\lg_\beta n}^{\is+1}(\U),\dots,S_{\is+1}^{\is+1}(\U)$ where
  $S_j^{\is+1}(\U)$ contains the cells that were last updated during
  epoch $j$ when executing only the updates $\U_{\lg_\beta n}, \dots,
  \U_{\is+1}$.

\item The decoder now recovers
  $Q(\U),C_\is(\U),S_{\is-1}(\U),\dots,S_1(\U)$ and $\uv_{\is-1},
  \dots, \uv_1$ from the encoding. For each query $q \in Q(\U)$, the
  decoder then computes the answer to $q$ as if all updates
  $\U_{\lg_\beta n},\dots, \U_1$ had been performed. The decoder
  accomplishes this by simulating the query algorithm on $q$, and for
  each cell requested, the decoder recovers the contents of that cell
  as it would have been if all updates $\U_{\lg_\beta n},\dots, \U_1$
  had been performed. This is done as follows: When the query
  algorithm requests a cell $c$, the decoder first determines whether
  $c$ is in one of the sets $S_{\is-1}(\U),\dots,S_1(\U)$. If so, the
  correct contents of $c$ is directly recovered. If $c$ is not amongst
  these cells, the decoder checks whether $c$ is in $C_\is(\U)$. If
  so, the decoder has again recovered the contents. Finally, if $c$ is
  not in $C_\is(\U)$, then from point 4 of Lemma~\ref{lem:static}, we
  get that $c$ is not in $S_\is(\U)$. Since $c$ is not in any of
  $S_\is(\U),\dots,S_1(\U)$, this means that the contents of $c$ has
  not changed during the updates $\U_\is, \dots, \U_1$, and thus the
  decoder finally recovers the contents of $c$ from $S_{\lg_\beta
    n}^{\is+1}(\U),\dots,S_{\is+1}^{\is+1}(\U)$. The decoder can
  therefore recover the answer to each query $q$ in $Q(\U)$ if it had
  been executed after the sequence of updates $\U$, i.e. for all $q
  \in Q(\U)$, he knows $\sum_{i=1}^{\lg_\beta n} \langle \inc{i}(q) ,
  \uv_i \rangle$.

\item The next decoding step consists of computing for each query $q$
  in $Q(\U)$, the value $\langle \inc{\is}(q),\uv_\is \rangle$. For each $q \in
  Q(\U)$, the decoder already knows the value $\sum_{i=1}^{\lg_\beta n}
  \langle \inc{i}(q),\uv_i\rangle$ from the above. From the encoding of
  $\uv_{\is-1}, \dots, \uv_1$, the decoder can compute the value
  $\sum_{i=1}^{\is-1} \langle \inc{i}(q),\uv_i \rangle$ and finally from
  $\U_{\lg_\beta n}, \dots, \U_{\is+1}$ the decoder computes
  $\sum_{i=\is+1}^{\lg_\beta n} \langle \inc{i}(q),\uv_i \rangle$. The
  decoder can now recover the value $\langle \inc{\is}(q),\uv_\is
  \rangle$ simply by observing that $\langle \inc{\is}(q),\uv_\is \rangle=
  \sum_{i=1}^{\lg_\beta n} \langle \inc{i}(q),\uv_i\rangle - \sum_{i
    \neq \is} \langle \inc{i}(q),\uv_i \rangle$.

\item Now from the query set $Q(\U)$, the decoder construct the set of
  vectors $\VecSet=\inc{\is}(Q(\U))$, and then iterates through all
  vectors in $[\Delta]^{\beta^\is}$, in the same fixed order as the
  encoder. For each such vector $x$, the decoder again verifies
  whether $x$ is in $\vecspan(\VecSet)$, and if not, adds $x$ to
  $\VecSet$ and recovers $\langle x,\uv_\is \rangle \mod \Delta$ from
  the encoding. The decoder now constructs the $\beta^\is \times
  \beta^\is$ matrix $\Matrix$, having the vectors in $\VecSet$ as
  rows. Similarly, he construct the vector $\z$ having one coordinate
  for each row of $\Matrix$. The coordinate of $\z$ corresponding to a
  row vector $x$, has the value $\langle x, \uv_\is \rangle \mod
  \Delta$. Since $\Matrix$ has full rank, it follows that the linear system of
  equations $\Matrix \otimes \y = \z$ has a unique solution $\y \in
  [\Delta]^{\beta^\is}$. But $\uv_\is \in
  [\Delta]^{\beta^\is}$ and $\Matrix \otimes \uv_\is = \z$, thus the
  decoder solves the linear system of equations $\Matrix \otimes
  \y = \z$ and uniquely recovers $\uv_\is$, and therefore also
  $\U_\is$. This completes the decoding procedure.
\end{enumerate}

\paragraph{Analysis.}
We now analyse the expected size of the encoding of
$\U_\is$. We first analyse the size of the encoding when
$t_\is(\U) \leq 2\E[t_\is(\U,\q)]$. In this case, the encoder sends a
message of
\[
2|C_\is(\U)|w+\lg\binom{n^2}{|Q(\U)|}+(\beta^\is-|Q(\U)|)\lg \Delta
+O(w \lg_\beta n)+\sum_{j=1}^{\is-1}(2|S_j(\U)|w+\beta^j\lg \Delta)
\]
bits. Since $\beta^\is \lg \Delta = H(\U_\is)$ and
$|C_\is(\U)|w=o(|Q(\U)|)$, the above is upper bounded by
\[
H(\U_\is)-|Q(\U)|\lg(\Delta/n^2)+o(|Q(\U)|)
+\sum_{j=1}^{\is-1}(2|S_j(\U)|w+\beta^j\lg \Delta).
\]
Since $\beta \geq 2$, we also have $\sum_{j=1}^{\is-1} \beta^j \lg
\Delta \leq 2\beta^{\is-1}\lg \Delta=o(|Q(\U)|\lg \Delta)$. Similarly, we
have $|S_j(\U)| \leq \beta^j t_u$, which gives us
$\sum_{j=1}^{\is-1}2|S_j(\U)|w \leq 4\beta^{\is-1}wt_u=o(|Q(\U)|)$. From
standard results on prime numbers, we have that the largest prime
number smaller than $n^4$ is at least $n^3$ for infinitely many $n$,
i.e. we can assume $\lg(\Delta/n^2)=\Omega(\lg \Delta)$. Therefore,
the above is again upper bounded by
\[
H(\U_\is)-\Omega(|Q(\U)|\lg \Delta)=H(\U_\is)-\Omega(\beta^{\is-3/4}\lg \Delta).
\]
This part thus contributes at most
\[
\Pr[t_\is(\U) \leq 2 \E[t_\is(\U,\q)]] \cdot
(H(\U_\is)-\Omega(\beta^{\is-3/4}\lg \Delta))
\]
bits to the expected size of the encoding. The case where
$t_\is(\U)>2\E[t_\is(\U,\q)]$ similarly contributes
$
\Pr[t_\is(\U) > 2 \E[t_\is(\U,\q)]] \cdot (H(\U_\is)+O(1))
$
bits to the expected size of the encoding. Now since $\q$ is uniform, we have
$\E[t_\is(\U)]=\E[t_\is(\U,\q)]$, we therefore get from
Markov's inequality that $\Pr[t_\is(\U) > 2 \E[t_\is(\U,\q)]]<\tfrac{1}{2}$.
Therefore the expected size of the encoding is upper bounded by
$
O(1)+\tfrac{1}{2}H(\U_\is)+\tfrac{1}{2}(H(\U_\is)-\Omega(\beta^{\is-3/4} \lg
\Delta)) < H(\U_\is).
$
This completes the proof of
Lemma~\ref{lem:probes}.

\section{The Static Setup}
\label{sec:static}
Finally, in this section we prove Lemma~\ref{lem:static}, the last
piece in the lower bound proof. As already mentioned, we prove the
lemma by extending on previous ideas for proving lower bounds on
static range counting. We note that we have chosen a more geometric
(and we believe more intuitive) approach to the proof than the
previous papers.

For the remainder of the section, we let $U=U_{\lg_\beta n},\dots,U_1$
be a fixed sequence of updates, where each $U_j$ is a possible outcome
of $\U_j$, and $i \geq \tfrac{15}{16}\lg_\beta n$ an epoch. Furthermore,
we assume that the claimed data structure satisfies
$t_i(U)=o(\lg_\beta n)$, and our task is to show that the claimed cell
set $C_i$ and query set $Q$ exists.

The first step is to find a geometric property of a set of queries
$Q$, such that $\inc{i}(Q)$ is a linearly independent set of
vectors. One property that ensures this, is that the queries in $Q$
are sufficiently \emph{well spread}. To make this more formal, we
introduce the following terminology:

A \emph{grid} $G$ with \emph{width} $\mu \geq 1$ and \emph{height}
$\gamma \geq 1$, is the collection of \emph{grid cells} $[j
  \mu,(j+1)\mu) \times[h\gamma,(h+1)\gamma)$ such that $0 \leq j <
    n/\mu$ and $0\leq h < n/\gamma$. We say that a query point
    $q=(x,y) \in [n] \times [n]$ \emph{hits} a grid cell $[j
      \mu,(j+1)\mu) \times[h\gamma,(h+1)\gamma)$ of $G$, if the point
        $(x,y)$ lies within that grid cell, i.e. if $j\mu \leq x <
        (j+1)\mu$ and $h \gamma \leq y < (h+1)\gamma$. Finally, we
        define the \emph{hitting number} of a set of queries $Q'$ on a
        grid $G$, as the number of distinct grid cells in $G$ that is
        hit by a query in $Q'$.

With this terminology, we have the following lemma:

\begin{lemma}
\label{lem:spread}
Let $Q'$ be a set of queries and $G$ a grid with width $\mu$ and
height $n^2/\beta^i \mu$ for some parameter $n/\beta^i \leq \mu \leq
n$. Let $h$ denote the hitting number of $Q'$ on $G$. Then there is a
subset of queries $Q \subseteq Q'$, such that $|Q| =
\Omega(h-6n/\mu-6\mu\beta^i/n)$ and $\inc{i}(Q)$ is a linearly
independent set of vectors in $[\Delta]^{\beta^i}$.
\end{lemma}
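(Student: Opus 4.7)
The plan is to select a subset $Q \subseteq Q'$ and, for each $q_k \in Q$, exhibit a distinct Fibonacci lattice point $p_k$ such that $q_k$ dominates $p_k$ but no earlier-ordered $q_{k'} \in Q$ does. Such an upper-triangular witness structure immediately forces $\inc{i}(Q)$ to be linearly independent in $[\Delta]^{\beta^i}$, so the whole task reduces to finding the right $Q$ and the right witnesses.

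First I will discard every hit cell lying in the outermost three rows or three columns of $G$; this costs at most $6n/\mu + 6\mu\beta^i/n$ hit cells and guarantees that each surviving hit cell $(j_k,h_k)$ satisfies $3 \le j_k \le n/\mu - 4$ and $3 \le h_k \le \mu\beta^i/n - 4$. Next I will $9$-color the remaining cells by the residue pair $(j_k \bmod 3,\; h_k \bmod 3)$ and, by pigeonhole, retain the most populous class $\Pi$, which still contains $\Omega(h - 6n/\mu - 6\mu\beta^i/n)$ cells. From each cell in $\Pi$ pick any query $q_k \in Q'$ in that cell, and order the picks lexicographically by $(j_k,h_k)$. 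Because the cells of $\Pi$ all agree modulo $3$ in both coordinates, any two distinct cells of $\Pi$ differ by at least $3$ in whichever coordinate they differ.

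For the witness of $q_k$ I take the rectangle
\[
R_k \;=\; [(j_k - 2)\mu,\; x_k] \,\times\, [(h_k - 2)\gamma,\; y_k],
\]
where $\gamma = n^2/(\beta^i\mu)$. The boundary cleanup keeps $R_k \subseteq [0, n - n/\beta^i]^2$ so Lemma~\ref{lem:fibarea} applies; and since $x_k \ge j_k\mu$, $y_k \ge h_k\gamma$, the dimensions of $R_k$ are at least $2\mu \times 2\gamma$, i.e.\ area $\ge 4n^2/\beta^i$, so $R_k$ contains at least $\lfloor 4/a_1 \rfloor \ge 2$ lattice points; let $p_k$ be any one of them. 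The $3$-spacing of $\Pi$ makes the rectangles $\{R_k\}$ pairwise disjoint (each $R_k$ has $x$-range $\subseteq [(j_k-2)\mu,(j_k+1)\mu)$ and $y$-range $\subseteq [(h_k-2)\gamma,(h_k+1)\gamma)$, and for cells of $\Pi$ differing by $\ge 3$ in some coordinate these ranges are disjoint in that coordinate), so the $p_k$'s are automatically distinct. Upper-triangularity is now immediate: for any lex-earlier $q_{k'}$, either $j_{k'} \le j_k - 3$, in which case $x_{k'} < (j_{k'}+1)\mu \le (j_k-2)\mu \le p_{k,x}$; or $j_{k'} = j_k$ and $h_{k'} \le h_k - 3$, in which case $y_{k'} < (h_{k'}+1)\gamma \le (h_k-2)\gamma \le p_{k,y}$. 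Either way $q_{k'}$ does not dominate $p_k$.

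The main subtlety lies in calibrating the enlargement of the witness rectangle so that three tensions resolve at once: $R_k$ must (i) be large enough that Lemma~\ref{lem:fibarea} forces a Fibonacci lattice point inside it, (ii) be disjoint from every other $R_{k'}$ so that the pivots $p_k$ come out distinct, and (iii) lie beyond the reach of every lex-earlier query's dominance region. Extending $R_k$ by two grid cells to the south and west, combined with the $3$-spacing granted by the $9$-coloring, simultaneously meets (ii) and (iii), while the initial boundary cleanup absorbs the edge cases and yields exactly the $6n/\mu + 6\mu\beta^i/n$ slack promised in the statement.
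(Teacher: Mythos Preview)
Your proof is correct and follows essentially the same approach as the paper: select one query per hit cell, thin the selection so that surviving cells are well-spaced in both coordinates, and then for each surviving query exhibit a Fibonacci lattice point in a small rectangle just to its south-west that no lex-earlier query can dominate. The only differences are cosmetic: the paper achieves the spacing by twice halving columns and twice halving rows (a factor-$16$ loss) rather than your $9$-coloring by residues mod $3$, and the paper's witness rectangle is the $2\times 2$ block of crossed-out cells strictly below-left of the query's cell rather than your rectangle extending up to $(x_k,y_k)$; consequently the paper only trims the bottom two rows and left two columns, whereas your choice of $R_k$ forces you to trim on all four sides to keep $R_k\subseteq[0,n-n/\beta^i]^2$.
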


We defer the proof of Lemma~\ref{lem:spread} to
Section~\ref{sec:proofspread}, and instead continue the proof of
Lemma~\ref{lem:static}.

In light of Lemma~\ref{lem:spread}, we set out to find a set of cells
$C_i \subseteq S_i(U)$ and a grid $G$, such that the set of queries
$Q_{C_i}$ that probe no cells in $S_i(U) \setminus C_i$, hit a large
number of grid cells in $G$. For this, first define the grids
$G_2,\dots,G_{2i-2}$ where $G_j$ has width $n/\beta^{i-j/2}$ and
height $n/\beta^{j/2}$.  The existence of $C_i$ is guaranteed by the
following lemma:

\begin{lemma}
\label{lem:hitmany}
Let $i \geq \tfrac{15}{16}\lg_\beta n$ be an epoch and $U_{\lg_\beta
  n},\dots,U_1$ a fixed sequence of updates, where each $U_j$ is a
possible outcome of $\U_j$. Assume furthermore that the claimed data
structure satisfies $t_i(U) = o(\lg_\beta n)$. Then there exists a set
of cells $C_i \subseteq S_i(U)$ and an index $j \in \{2,\dots,2i-2\}$,
such that $|C_i| = O(\beta^{i-1}w)$ and $Q_{C_i}$ has hitting number
$\Omega(\beta^{i-3/4})$ on the grid $G_j$.
\end{lemma}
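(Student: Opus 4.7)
The plan is to adapt the random-sampling approach of Panigrahy et al.\ to the geometric setting, combined with a multi-scale argument over the grid family $G_2,\dots,G_{2i-2}$. First, I would sample $C_i$ as a uniformly random subset of $S_i(U)$ of size $\kappa = c\beta^{i-1}w$ for a suitably large constant $c$. For a query $q$ with $t = t_i(U,q)$ probes into $S_i(U)$, the probability that all $t$ of those cells fall inside the random $C_i$ is the hypergeometric quantity
$\prod_{s=0}^{t-1} (\kappa-s)/(|S_i(U)|-s)$, which is at least $(\kappa/(2|S_i(U)|))^t$. Since $|S_i(U)| \leq \beta^i t_u$ and $\beta=(wt_u)^9$, the ratio $\kappa/|S_i(U)|$ is at least $\beta^{-O(1)}$, so $\Pr[q\in Q_{C_i}] \geq \beta^{-O(t)}$. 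Applying Markov's inequality to $t_i(U) = o(\lg_\beta n)$ exhibits a set $Q^\star$ of size $(1-o(1))n^2$ on which $t_i(U,q) \leq 2 t_i(U) = o(\lg_\beta n)$, and for every $q \in Q^\star$ the resolution probability is $\beta^{-o(\lg_\beta n)} = n^{-o(1)}$. Averaging then fixes a deterministic $C_i$ with $|Q_{C_i} \cap Q^\star| \geq n^{2-o(1)}$.

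The heart of the argument is to convert this many resolved queries into a large hitting number on some $G_j$. To each resolved $q=(x,y)$ I attach its dyadic type, the pair $(l,m)$ with $x$ of order $n/\beta^{l/2}$ and $y$ of order $n/\beta^{m/2}$. There are $O(\lg_\beta^2 n)$ types, so one type contains $\Omega(n^{2-o(1)}/\lg_\beta^2 n)$ resolved queries. The grid $G_{j^\star}$ with $j^\star=2m^\star$ is matched to this aspect ratio: the queries of that type lie inside an $O(1)$-row, $O(\beta^{i-j^\star/2})$-column sub-block of $G_{j^\star}$, and each cell of the sub-block contains only $n^2/\beta^i$ lattice points. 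Pigeonhole inside the sub-block then yields hitting number at least $\Omega(\beta^i \cdot n^{-o(1)}/\lg_\beta^2 n)$, and the assumption $i \geq \tfrac{15}{16}\lg_\beta n$ makes $\beta^i$ a sufficiently high polynomial power of $n$ that, after tuning the constants in Steps~1--2 so that the $n^{o(1)}$ slack is small, this quantity exceeds $\beta^{i-3/4}$.

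The main obstacle is this last step. A naive pigeonhole over all of $[n]\times[n]$ only gives hitting number $\beta^i / n^{o(1)}$, which does not beat $\beta^{i-3/4}$ when $\beta$ is merely polylogarithmic in $n$. The multi-grid family is precisely what repairs this: matching the grid aspect ratio to the dyadic class containing a significant share of the resolved queries converts their local concentration into many distinct grid cells hit. This mirrors the static arguments of \cite{patrascu07sum2D,larsen:median}, and the quantitative balancing is delicate, relying on the interplay between the $\tfrac{15}{16}$ epoch threshold, the $\tfrac{3}{4}$ slack in the target, and the specific choice $\beta=(wt_u)^9$.
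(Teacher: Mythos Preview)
Your cell-sampling steps 1--2 are fine and do yield a fixed $C_i$ of size $O(\beta^{i-1}w)$ with $|Q_{C_i}|\ge n^{2-o(1)}$; this mirrors Lemma~\ref{lem:exstatic}. The gap is entirely in step 3, the conversion from ``many resolved queries'' to ``large hitting number on some $G_j$.'' First, the notion of dyadic type (``$x$ of order $n/\beta^{l/2}$'') is not well-defined for an arbitrary point $q\in[n]\times[n]$; there is no natural scale attached to a single coordinate. Second, and more seriously, even granting $|Q_{C_i}|\ge n^{2-o(1)}$, nothing prevents $Q_{C_i}$ from being, say, an axis-aligned rectangle of area $n^{2-o(1)}$. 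For such a set the hitting number on \emph{every} $G_j$ is at most $\beta^i\cdot n^{-o(1)}$, since each $G_j$ has exactly $\beta^i$ cells of area $n^2/\beta^i$. You need this to exceed $\beta^{i-3/4}$, i.e.\ $\beta^{3/4}\ge n^{o(1)}$; but the $o(1)$ here is governed by the rate at which $t_i(U)/\lg_\beta n\to 0$ and can be as large as $1/\lg\lg n$, whereas $\beta=(wt_u)^9$ may be merely polylogarithmic. No tuning of the constant $c$ in $\kappa=c\beta^{i-1}w$ repairs this, since $|C_i|=O(\beta^{i-1}w)$ is a hard constraint. The multi-grid family does not help a direct pigeonhole argument: a rectangle has the same hitting number on every $G_j$.

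The paper's proof is structurally different. It does \emph{not} fix $C_i$ first and then count; instead it assumes for contradiction that \emph{every} admissible $C_i$ has hitting number $o(\beta^{i-3/4})$ on \emph{every} $G_j$, and derives an impossible encoding of a random query set $\Q$ drawn one-per-slab from $\beta^{i-1}$ vertical slabs. The cells probed by $\Q$ form the candidate $C_i=S_i(\Q,U)$, and the small-hitting-number assumption across \emph{all} scales $j=2,\dots,2i-2$ is what makes the encoding short: at each scale one specifies which of the $o(\beta^{i-3/4})$ hit cells of $G_j$ contains each well-separated query, and the well-separation lets the decoder zoom in uniquely from $G_j$ to $G_{j+1}$. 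Summing $\lg(\beta^{i-3/4}/\beta^{i-1})=\tfrac14\lg\beta$ bits per query over $2i-2$ scales gives roughly $\tfrac12\lg n$ bits per query, beating the entropy $\lg(n^2/\beta^{i-1})\ge\lg n$. The point is that ``small hitting number simultaneously at every scale'' is a far stronger hypothesis than ``small hitting number at one scale,'' and only an encoding argument that spends bits at every scale exploits it.
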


To not remove focus from the proof of Lemma~\ref{lem:static} we have
moved the proof of this lemma to Section~\ref{sec:proofhitmany}. We
thus move on to show that Lemma~\ref{lem:spread} and
Lemma~\ref{lem:hitmany} implies Lemma~\ref{lem:static}. By assumption
we have $t_i(U)=o(\lg_\beta n)$. Combining this with
Lemma~\ref{lem:hitmany}, we get that there exists a set of cells $C_i
\subseteq S_i(U)$ and an index $j \in \{2,\dots,2i-2\}$, such that
$|C_i| = O(\beta^{i-1}w)$ and the set of queries $Q_{C_i}$ has hitting
number $\Omega(\beta^{i-3/4})$ on the grid $G_j$. Furthermore, we have
that grid $G_j$ is a grid of the form required by
Lemma~\ref{lem:spread}, with $\mu = n/\beta^{i-j/2}$. Thus by
Lemma~\ref{lem:spread} there is a subset $Q \subseteq Q_{C_i}$ such
that $|Q| = \Omega(\beta^{i-3/4}-12\beta^{i-1}) =
\Omega(\beta^{i-3/4})$ and $\inc{i}(Q)$ is a linearly independent set
of vectors in $[\Delta]^{\beta^i}$. This completes the proof of Lemma~\ref{lem:static}.

\subsection{Proof of Lemma~\ref{lem:spread}}
\label{sec:proofspread}
We prove the lemma by giving an explicit construction of the set
$Q$. 

First initialize $Q$ to contain one query point from $Q'$ from each
cell of $G$ that is hit by $Q'$. We will now repeatedly eliminate
queries from $Q$ until the remaining set is linearly independent. We
do this by \emph{crossing out} rows and columns of $G$. By crossing
out a row (column) of $G$, we mean deleting all queries in $Q$ that
hits a cell in that row (column). The procedure for crossing out rows
and columns is as follows:

First cross out the bottom two rows and leftmost two columns. Amongst
the remaining columns, cross out either the even or odd columns,
whichever of the two contains the fewest remaining points in
$Q$. Repeat this once again for the columns, with even and odd
redefined over the remaining columns. Finally, do the same for the
rows. We claim that the remaining set of queries are linearly
independent. To see this, order the remaining queries in increasing
order of column index (leftmost column has lowest index), and
secondarily in increasing order of row index (bottom row has lowest
index). Let $q_1,\dots,q_{|Q|}$ denote the resulting sequence of
queries. For this sequence, it holds that for every query $q_j$, there
exists a coordinate $\inc{i}(q_j)_h$, such that $\inc{i}(q_j)_h=1$,
and at the same time $\inc{i}(q_k)_h=0$ for all $k < j$. Clearly this
implies linear independence. To prove that the remaining vectors have
this property, we must show that for each query $q_j$, there is some
point in the scaled Fibonacci lattice $F_{\beta^i}$ that is dominated
by $q_j$, but not by any of $q_1,\dots,q_{j-1}$: Associate each
remaining query $q_j$ to the two-by-two crossed out grid cells to the
bottom-left of the grid cell hit by $q_j$. These four grid cells have
area $4n^2/\beta^i$ and are contained within the rectangle
$[0,n-n/\beta^i] \times [0,n-n/\beta^i]$, thus from
Lemma~\ref{lem:fibarea} it follows that at least one point of the
scaled Fibonacci lattice $F_{\beta^i}$ is contained therein, and thus
dominated by $q_j$. But all $q_k$, where $k<j$, either hit a grid cell
in a column with index at least three less than that hit by $q_j$ (we
crossed out the two columns preceding that hit by $q_j$), or they hit
a grid cell in the same column as $q_j$ but with a row index that is
at least three lower than that hit by $q_j$ (we crossed out the two
rows preceding that hit by $q_j$). In either case, such a query
cannot dominate the point inside the cells associated to $q_j$.

What remains is to bound the size of $Q$. Initially, we have
$|Q|=h$. The bottom two rows have a total area of $2n^3/\beta^i \mu$,
thus by Lemma~\ref{lem:fibarea} they contain at most $6n/\mu$
points. The leftmost two columns have area $2n\mu$ and thus contain at
most $6\mu\beta^i/n$ points. After crossing out these rows and column
we are therefore left with $|Q| \geq h-6n/\mu-6\mu\beta^i/n$. Finally,
when crossing out even or odd rows we always choose the one
eliminating fewest points, thus the remaining steps at most reduce the
size of $Q$ by a factor $16$. This completes the proof of
Lemma~\ref{lem:spread}.

\subsection{Proof of Lemma~\ref{lem:hitmany}}
\label{sec:proofhitmany}
We prove the lemma using another encoding argument. However, this time
we do not encode an update sequence, but instead we define a
distribution over query sets, such that if Lemma~\ref{lem:hitmany} is
not true, then we can encode such a query set in too few bits.

Let $U=U_{\lg_\beta n},\dots,U_1$ be a fixed sequence of updates,
where each $U_j$ is a possible outcome of $\U_j$. Furthermore, assume
for contradiction that the claimed data structure satisfies both
$t_i(U) = o(\lg_\beta n)$ and for all cell sets $C \subseteq S_i(U)$
of size $|C|=O(\beta^{i-1}w)$ and every index $j \in \{2,\dots,2i-2
\}$, it holds that the hitting number of $Q_C$ on grid $G_j$ is
$o(\beta^{i-3/4})$. Here $Q_C$ denotes the set of all queries $q$ in
$[n]\times[n]$ such that the query algorithm of the claimed data
structure probes no cells in $S_i(U) \setminus C$ when answering $q$
after the sequence of updates $U$. Under these assumptions we will
construct an impossible encoder. As mentioned, we will encode a set of
queries:

\paragraph{Hard Distribution.}
Let $\Q$ denote a random set of queries, constructed by drawing one
uniform random query (with integer coordinates) from each of the
$\beta^{i-1}$ vertical slabs of the form:
$$[hn/\beta^{i-1},(h+1)n/\beta^{i-1}) \times [0,n),$$ where $h \in
    [\beta^{i-1}]$. Our goal is to encode $\Q$ in less than $H(\Q)
    =\beta^{i-1} \lg(n^2/\beta^{i-1})$ bits in expectation. Before
    giving the encoding and decoding procedures, we prove some simple
    properties of $\Q$:

Define a query $q$ in a query set $Q'$ to be \emph{well-separated} if
for all other queries $q' \in Q'$, where $q \neq q'$, $q$ and $q'$ do
not lie within an axis-aligned rectangle of area
$n^2/\beta^{i-1/2}$. Finally, define a query set $Q'$ to be
\emph{well-separated} if at least $\tfrac{1}{2}|Q'|$ queries in $Q'$ are
well-separated. We then have:
\begin{lemma}
\label{lem:qsize}
The query set $\Q$ is well-separated with probability at least $3/4$.
\end{lemma}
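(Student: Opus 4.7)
The strategy will be to fix an arbitrary slab $h$, upper bound the probability that the query $q_h = (x,y) \in \Q$ drawn from that slab fails to be well-separated, and then apply Markov's inequality. Note that $q_h$ fails to be well-separated exactly when there exists another $q_{h'} = (x',y') \in \Q$ such that the minimum axis-aligned bounding box of $q_h, q_{h'}$ has area at most $A := n^2/\beta^{i-1/2}$, i.e., $|x-x'|\cdot|y-y'| \leq A$. Thus I would define the hyperbolic region
\[
R(q_h) := \{(x',y') \in [n]\times[n] : |x-x'|\cdot|y-y'| \leq A\}
\]
and work with the union bound $\Pr[q_h \text{ not well-separated} \mid q_h] \leq \sum_{h' \neq h} \Pr[q_{h'} \in R(q_h) \mid q_h]$, using that conditional on $q_h$ the other slab queries are independent.

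The first key ingredient is an area bound. Integrating against the hyperbola $uv = A$ on $[-n,n]^2$ gives $\mathrm{area}(R(q_h)) = O(A\,\lg(n^2/A)) = O(n^2\lg n / \beta^{i-1/2})$, where I used $i \leq \lg_\beta n$ to bound $\lg(\beta^{i-1/2}) \leq \lg n$. The second ingredient is that the slabs partition $[n]\times[n]$ and each $q_{h'}$ is uniform within its slab of area $n^2/\beta^{i-1}$; summing the per-slab probabilities and telescoping via disjointness yields
\[
\Pr[q_h \text{ not well-separated} \mid q_h] \;\leq\; \frac{\beta^{i-1}\cdot\mathrm{area}(R(q_h))}{n^2} \;=\; O\!\left(\frac{\lg n}{\beta^{1/2}}\right).
\]
Since $\beta = (wt_u)^9$ and $w = \Omega(\lg n)$, we have $\beta^{1/2} \geq (\lg n)^{4.5}$, so the bound is $o(1)$.

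To conclude, linearity of expectation gives $\E[\#\{q \in \Q : q \text{ not well-separated}\}] \leq \beta^{i-1}\cdot O(\lg n / \beta^{1/2}) = o(\beta^{i-1})$, and Markov's inequality then shows that more than half of $\Q$ is non-well-separated with probability $o(1)$; for $n$ sufficiently large this is at most $1/4$, which yields the claimed $3/4$ bound. The only minor technicality I foresee is that $R(q_h)$ may be clipped against the boundary of $[n]\times[n]$ when $q_h$ lies near an edge, but clipping only shrinks the area and therefore strengthens the estimate, so the main obstacle is really just the hyperbolic integral, which is routine.
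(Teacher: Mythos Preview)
Your proof is correct and takes a genuinely cleaner route than the paper's. The paper conditions on whether $q_h$ lies within distance $n/\beta^{i-3/4}$ of its slab boundary; if not, it lower-bounds $|x-x'|$ slab-by-slab as $(|k-h|-1)n/\beta^{i-1}+n/\beta^{i-3/4}$, converts this into an upper bound on the admissible $|y-y'|$, and sums the resulting harmonic series to obtain $O(1/\beta^{1/4}+\lg n/\beta^{1/2})$. You instead bound the total area of the hyperbolic region $R(q_h)$ directly as $O(A\lg(n^2/A))$, then exploit the disjointness of the slabs to collapse $\sum_{h'\neq h}\Pr[q_{h'}\in R(q_h)]$ into $\beta^{i-1}\cdot\mathrm{area}(R(q_h))/n^2$, landing on $O(\lg n/\beta^{1/2})$ without any case split. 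Your approach is shorter and more geometric; the paper's buys nothing extra here, though its per-slab decomposition makes the dependence on slab index explicit, which could matter in a finer analysis. Both finish identically via linearity of expectation and Markov. One small remark: the queries are drawn from integer points, so strictly you are counting lattice points under the hyperbola rather than area, but the estimate is the same up to a negligible $O(n)$ boundary term.
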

\begin{proof}
Let $\q_h$ denote the random query in $\Q$ lying in the $h$'th
vertical slab. The probability that $\q_h$ lies within a distance of at most  $n/\beta^{i-3/4}$ from the $x$-border of 
the $h$'th slab is precisely $(2n/\beta^{i-3/4}) / (n/\beta^{i-1}) =
2/\beta^{1/4}$. If this is not the case, then for another query $\q_k$
in $\Q$, we know that the $x$-coordinates of $\q_h$ and $\q_k$ differ
by at least $(|k-h|-1)n/\beta^{i-1}+n/\beta^{i-3/4}$. This implies
that $\q_h$ and $\q_k$ can only be within an axis-aligned rectangle of
area $n^2/\beta^{i-1/2}$ if their $y$-coordinates differ by at most
$n/((|k-h|-1)\beta^{1/2}+\beta^{1/4})$. This happens with probability
at most $2/((|k-h|-1)\beta^{1/2}+\beta^{1/4})$. The probability that a
query $\q_h$ in $\Q$ is not well-separated is therefore bounded by
\begin{eqnarray*}
\frac{2}{\beta^{1/4}}+(1-\frac{2}{\beta^{1/4}})\sum_{k \neq j}
\frac{2}{(|k-h|-1)\beta^{1/2}+\beta^{1/4}}
&\leq& \frac{10}{\beta^{1/4}}+\sum_{k \neq
  j} \frac{2}{|k-h|\beta^{1/2}}=O\left(\frac{1}{\beta^{1/4}}+\frac{\lg n}{\beta^{1/2}}\right).
\end{eqnarray*}
Since $\beta=(wt_u)^9=\omega(\lg^2 n)$ this probability is $o(1)$,
and the result now follows from linearity of expectation
and Markov's inequality.
\end{proof}

Now let $S_i(Q,U) \subseteq S_i(U)$ denote the subset of cells in
$S_i(U)$ probed by the query algorithm of the claimed data structure
when answering all queries in a set of queries $Q$ after the sequence
of updates $U$ (i.e. the union of the cells probed for each query in
$Q$). Since a uniform random query from $\Q$ is uniform in $[n] \times
[n]$, we get by linearity of expectation that $\E[|S_i(\Q,U)|]=\beta^{i-1}t_i(U)$. From this, Lemma~\ref{lem:qsize}, Markov's
inequality and a union bound, we conclude
\begin{lemma}
\label{lem:well}
The query set $\Q$ is both well-separated and $|S_i(\Q,U)| \leq 4
\beta^{i-1} t_i(U)$ with probability at least $1/2$.
\end{lemma}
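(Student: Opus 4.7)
The plan is a short union-bound argument combining Lemma~\ref{lem:qsize} with a Markov-type bound on $|S_i(\Q,U)|$. Both ingredients are essentially already set up in the paragraph preceding the lemma statement, so the proof should be just a few lines.

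First I would justify the claim $\E[|S_i(\Q,U)|] \leq \beta^{i-1}\,t_i(U)$ more carefully. Since the $\beta^{i-1}$ slabs form a uniform partition of $[n]\times[n]$ and each $\q_h$ is drawn uniformly inside its slab, a query chosen by first picking $h$ uniformly and then $\q_h$ is uniform on $[n]\times[n]$. The set $S_i(\Q,U)$ is the union over $h$ of the cells probed from $S_i(U)$ when answering $\q_h$, hence $|S_i(\Q,U)| \leq \sum_h t_i(U,\q_h)$. Taking expectations and using linearity gives $\E[|S_i(\Q,U)|] \leq \sum_h \E[t_i(U,\q_h)] = \beta^{i-1}\cdot t_i(U)$, since the average of $t_i(U,\q_h)$ over the uniform slab distribution, averaged over slabs, equals the average $t_i(U)$ over all of $[n]\times[n]$.

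Next I would apply Markov's inequality to obtain $\Pr[|S_i(\Q,U)| > 4\beta^{i-1}t_i(U)] < 1/4$. Together with Lemma~\ref{lem:qsize}, which asserts that $\Q$ fails to be well-separated with probability at most $1/4$, a union bound gives that the probability either property fails is at most $1/2$. Therefore both hold simultaneously with probability at least $1/2$, which is exactly the statement of Lemma~\ref{lem:well}.

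There is no genuine obstacle here; the only subtlety is the mild set-inclusion inequality $|S_i(\Q,U)| \leq \sum_h t_i(U,\q_h)$ (the union is at most the sum of sizes), and the observation that uniformly choosing an index into $\Q$ and then the query in its slab is equivalent to a single uniform draw from $[n]\times[n]$ — both of which are immediate.
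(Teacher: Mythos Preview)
Your proposal is correct and matches the paper's own argument essentially line for line: the paper also observes that a uniformly chosen query of $\Q$ is uniform in $[n]\times[n]$, deduces $\E[|S_i(\Q,U)|]\le\beta^{i-1}t_i(U)$ by linearity, and then combines Markov's inequality with Lemma~\ref{lem:qsize} via a union bound. Your explicit use of the inequality $|S_i(\Q,U)|\le\sum_h t_i(U,\q_h)$ is in fact slightly more careful than the paper, which writes the expectation as an equality.
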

With this established, we are now ready to give an impossible
encoding of  $\Q$.

\paragraph{Encoding.}
In the following we describe the encoding procedure. The encoder
receives as input the set of queries $\Q$. He then executes the
following procedure:
\begin{enumerate}
\item The encoder first executes the fixed sequence of updates $U$ on
  the claimed data structure, and from this obtains the sets
  $S_{\lg_\beta n}(U),\dots,S_1(U)$. He then runs the query algorithm
  for every query $q \in \Q$ and collects the set $S_i(\Q,U)$.
\item If $\Q$ is not well-separated or if $|S_i(\Q,U)|>4\beta^{i-1}
  t_i(U)$, then the encoder sends a $1$-bit followed by a
  straightforward encoding of $\Q$ using $H(\Q)+O(1)$ bits in
  total. This is the complete encoding procedure when either $\Q$ is
  not well-separated or $|S_i(\Q,U)|>4\beta^{i-1} t_i(U)$.

\item If $\Q$ is both well-separated and $|S_i(\Q,U)|\leq 4
  \beta^{i-1}t_i(U)$, then the encoder first writes a
  $0$-bit and then executes the remaining four steps.

\item The encoder examines $\Q$ and finds the at most $\tfrac{1}{2}|\Q|$
  queries that are not well-separated. Denote this set $\Q'$. The
  encoder now writes down $\Q'$ by first specifying $|\Q'|$, then
  which vertical slabs contain the queries in $\Q'$ and finally what
  the coordinates of each query in $\Q'$ is within its slab. This
  takes
  $O(w)+\lg\binom{|\Q|}{|\Q'|}+|\Q'|\lg(n^2/\beta^{i-1})=O(w)+O(\beta^{i-1})+|\Q'|\lg(n^2/\beta^{i-1})$
  bits.

\item The encoder now writes down the cell set $S_i(\Q,U)$, including
  \textbf{only} the addresses and \textbf{not} the contents. This
  takes $o(H(\Q))$ bits since 
\begin{eqnarray*}
\lg \binom{|S_i(U)|}{|S_i(\Q,U)|} &=& O(\beta^{i-1} t_i(U) \lg(\beta t_u))\\&=&o(\beta^{i-1}
  \lg(n^2/\beta^{i-1})),
\end{eqnarray*}
where in the first line we used that $|S_i(U)|\leq \beta^it_u$ and $|S_i(\Q,U)| \leq 4\beta^{i-1} t_i(U)$. The second line follows from the fact that
  $t_i(U)=o(\lg_\beta n)=o(\lg (n^2/\beta^{i-1})/\lg(\beta
  t_u))$ since $\beta = \omega(t_u)$.

\item Next we encode the $x$-coordinates of the well-separated queries
  in $\Q$. Since we have already encoded which vertical slabs contain
  well-separated queries (we really encoded the slabs containing
  queries that are not well-separated, but this is equivalent), we do
  this by specifying only the offset within each slab. This takes
  $(|\Q|-|\Q'|)\lg(n/\beta^{i-1})+O(1)$ bits. Following that, the
  encoder considers the last grid $G_{2i-2}$, and for each
  well-separated query $q$, he writes down the $y$-offset of $q$
  within the grid cell of $G_{2i-2}$ hit by $q$. Since the grid cells
  of $G_{2i-2}$ have height $n/\beta^{i-1}$, this takes
  $(|\Q|-|\Q'|)\lg(n/\beta^{i-1})+O(1)$ bits. Combined with the encoding
  of the $x$-coordinates, this step adds a total of
  $(|\Q|-|\Q'|)\lg(n^2/\beta^{2i-2})+O(1)$ bits to the size of the
  encoding.

\item In the last step of the encoding procedure, the encoder
  simulates the query algorithm for every query in $[n] \times [n]$
  and from this obtains the set $Q_{S_i(\Q,U)}$, i.e. the set of all
  those queries that probe no cells in $S_i(U) \setminus
  S_i(\Q,U)$. Observe that $\Q \subseteq Q_{S_i(\Q,U)}$. The encoder now
  considers each of the grids $G_j$, for $j=2,\dots,2i-2$, and
  determines both the set of grid cells $G_j^{Q_{S_i(\Q,U)}} \subseteq
  G_j$ hit by a query in $Q_{S_i(\Q,U)}$, and the set of grid cells
  $G_j^{\Q} \subseteq G_j^{Q_{S_i(\Q,U)}} \subseteq G_j$ hit by a
  well-separated query in $\Q$. The last step of the encoding consists
  of specifying $G_j^{\Q}$. This is done by encoding which subset of
  $G_j^{Q_{S_i(\Q,U)}}$ corresponds to $G_j^{\Q}$. This takes $\lg
  \binom{|G_j^{Q_{S_i(\Q,U)}}|}{|G_j^{\Q}|}$ bits for each
  $j=2,\dots,2i-2$.

Since $|S_i(\Q,U)|=o(\beta^{i-1}\lg_\beta n)=o(\beta^{i-1}w)$ we get
from our contradictory assumption that the hitting number of
$Q_{S_i(\Q,U)}$ on each grid $G_j$ is $o(\beta^{i-3/4})$, thus
$|G_j^{Q_{S_i(\Q,U)}}|=o(\beta^{i-3/4})$. Therefore the above amount of
bits is at most
\begin{eqnarray*}
(|\Q|-|\Q'|)\lg(\beta^{i-3/4} e/(|\Q|-|\Q'|))(2i-3) &\leq \\
(|\Q|-|\Q'|)\lg(\beta^{1/4})2i+O(\beta^{i-1}i) &\leq \\
(|\Q|-|\Q'|)\tfrac{1}{4}\lg(\beta)2\lg_\beta n+O(\beta^{i-1}\lg_\beta
n)&\leq \\
(|\Q|-|\Q'|)\tfrac{1}{2}\lg n+o(H(\Q)).
\end{eqnarray*}

This completes the encoding procedure, and the encoder finishes by
sending the constructed message to the decoder.
\end{enumerate}

Before analysing the size of the encoding, we show that the decoder
can recover $\Q$ from the encoding.

\paragraph{Decoding.} 
In this paragraph we describe the decoding procedure. The decoder only
knows the fixed sequence $U=U_{\lg_\beta n},\dots,U_1$ and the message
received from the encoder. The goal is to recover $\Q$, which is done
by the following steps:
\begin{enumerate}
\item The decoder examines the first bit of the message. If this is a
  $1$-bit, the decoder immediately recovers $\Q$ from the remaining
  part of the encoding.

\item If the first bit is $0$, the decoder proceeds with this step and
  all of the below steps. The decoder executes the updates $U$ on the
  claimed data structure and obtains the sets $S_{\lg_\beta
    n}(U),\dots,S_1(U)$. From step 4 of the encoding procedure, the
  decoder also recovers $\Q'$.

\item From step 5 of the encoding procedure, the decoder now recovers
  the addresses of the cells in $S_i(\Q,U)$. Since the decoder has the
  data structure, he already knows the contents. Following this, the
  decoder now simulates every query in $[n] \times [n]$, and from this
  and $S_i(\Q,U)$ recovers the set $Q_{S_i(\Q,U)}$.

\item From step 6 of the encoding procedure, the decoder now recovers
  the $x$-coordinates of every well-separated query in $\Q$ (the
  offsets are enough since the decoder knows which vertical slabs
  contain queries in $\Q'$, and thus also those that contain
  well-separated queries). Following that, the decoder also recovers
  the $y$-offset of each well-separated query $q \in \Q$ within the
  grid cell of $G_{2i-2}$ hit by $q$ (note that the decoder does not
  know what grid cell it is, he only knows the offset).

\item From the set $Q_{S_i(\Q,U)}$ the decoder now recovers the set
  $G_j^{Q_{S_i(\Q,U)}}$ for each $j=2,\dots,2i-2$. This information is
  immediate from the set $Q_{S_i(\Q,U)}$. From $G_j^{Q_{S_i(\Q,U)}}$ and
  step 7 of the encoding procedure, the decoder now recovers $G_j^{\Q}$
  for each $j$. In grid $G_2$, we know that $\Q$ has only one query in
  every column, thus the decoder can determine uniquely from $G_2^{\Q}$
  which grid cell of $G_2$ is hit by each well-separated query in
  $\Q$. Now observe that the axis-aligned rectangle enclosing all
  $\beta^{1/2}$ grid cells in $G_{j+1}$ that intersects a fixed
  grid cell in $G_j$ has area $n^2/\beta^{i-1/2}$. Since we are
  considering well-separated queries, i.e. queries where no two lie
  within an axis-aligned rectangle of area $n^2/\beta^{i-1/2}$, this
  means that $G_{j+1}^{\Q}$ contains at most one grid cell in such a
  group of $\beta^{1/2}$ grid cells. Thus if $q$ is a well-separated
  query in $\Q$, we can determine uniquely which grid cell of $G_{j+1}$
  that is hit by $q$, directly from $G_{j+1}^{\Q}$ and the grid cell in
  $G_j$ hit by $q$. But we already know this information for grid
  $G_2$, thus we can recover this information for grid
  $G_3,G_4,\dots,G_{2i-2}$. Thus we know for each well-separated query
  in $\Q$ which grid cell of $G_{2i-2}$ it hits. From the encoding of
  the $x$-coordinates and the $y$-offsets, the decoder have thus
  recovered $\Q$.
\end{enumerate}

\paragraph{Analysis.}
Finally we analyse the size of the encoding. First consider the case
where $\Q$ is both well-separated and $|S_i(\Q,U)|\leq 4
\beta^{i-1}t_i(U)$. In this setting, the size of the message is
bounded by
\[
|\Q'|\lg(n^2/\beta^{i-1})+(|\Q|-|\Q'|)(\lg(n^2/\beta^{2i-2})+\tfrac{1}{2}\lg
n)+o(H(\Q))
\]
bits. This equals
\[
|\Q|\lg(n^{2+1/2}/\beta^{2i-2})+|\Q'|\lg(\beta^{i-1}/n^{1/2})+o(H(\Q))
\]
bits. Since we are considering an epoch $i \geq
\tfrac{15}{16}\lg_\beta n$, we have $\lg(n^{2+1/2}/\beta^{2i-2}) \leq
\lg(n^{5/8}\beta^2)$, thus the above amount of bits is upper bounded
by
\[
|\Q|\lg(n^{5/8}\beta^2)+|\Q'|\lg(n^{1/2})+o(H(\Q)).
\]
Since $|\Q'| \leq \tfrac{1}{2}|\Q|$, this is
again bounded by
\[
|\Q|\lg(n^{7/8}\beta^2)+o(H(\Q))
\]
bits. But $H(\Q) = |\Q|\lg(n^2/\beta^i) \geq |\Q|\lg n$, i.e. our
encoding uses less than $\tfrac{15}{16}H(\Q)$ bits.

Finally, let $E$ denote the event that $\Q$ is well-separated and at
the same time $|S_i(\Q,U)|\leq 4 \beta^{i-1} t_i(U)$, then the
expected number of bits used by the entire encoding is bounded by
\[
O(1)+\Pr[E](1-\Omega(1))H(\Q)+(1-\Pr[E])H(\Q)
\]
The contradiction is now reached by invoking Lemma~\ref{lem:well} to
conclude that $\Pr[E]\geq 1/2$.

\section{Concluding Remarks}
\label{sec:conclude}
In this paper we presented a new technique for proving dynamic cell
probe lower bounds. With this technique we proved the highest dynamic
cell probe lower bound to date under the most natural setting of cell
size $w=\Theta(\lg n)$, namely a lower bound of $t_q=\Omega((\lg
n/\lg(wt_u))^2)$.

While our results have taken the field of cell probe lower bounds one
step further, there is still a long way to go. Amongst the results
that seems within grasp, we find it a very intriguing open problem to
prove an $\omega(\lg n)$ lower bound for a problem where the queries
have a one bit output. Our technique crucially relies on the output
having more bits than it takes to describe a query, since otherwise
the encoder cannot afford to tell the decoder which queries to
simulate. Since many interesting data structure problems have a one
bit output size, finding a technique for handling this case would
allow us to attack many more fundamental data structure problems. As a
technical remark, we note that when proving static lower bounds using
the cell sampling idea, the encoder does not have to write down the
queries to simulate. This is because queries are completely solved
from the cell sample and need not read any other cells. Hence the
decoder can simply try to simulate the query algorithm for every
possible query and simply discard those that read cells outside the
sample. In the dynamic case, we still have to read cells associated to
other epochs. For the future epochs (small epochs), this is not an
issue since we know all these cells. However, when simulating the
query algorithm for a query that is not resolved by the sample,
i.e. it reads other cells from the epoch we are deriving a
contradiction for, we cannot recognize that the query fails. Instead,
we will end up using the cell contents written in past epochs and
could potentially obtain an incorrect answer for the query and we have
no way of recognizing this. We believe that finding a way to
circumvent the encoding of queries is the most promising direction for
improvements.

Applying our technique to other problems is also
an important task, however such problems must again have a logarithmic
number of bits in the output of queries.

\section{Acknowledgment}
The author wishes to thank Peter Bro Miltersen for much useful
discussion on both the results and writing of this paper.


\end{document}